\def\ls{\mspace{4mu}} 
\def\qm{\boldsymbol{q}_m}
\def\ql{\boldsymbol{q}_\ell}
\def\gml{{g}_{m,\ell}}
\def\Gm{\boldsymbol{{G}}}
\def\Gmh{\boldsymbol{{\hat{G}}}}
\def\Gmt{\boldsymbol{{\tilde{G}}}}
\def\Tmat{\boldsymbol{{T}}}
\def\Pmat{\boldsymbol{{P}}}
\def\x{\boldsymbol{x}}
\def\hgml{\hat{g}_{m,\ell}}
\def\yb{\boldsymbol{y}}
\newcommand{\removelatexerror}{\let\@latex@error\@gobble}
\newtheorem{theorem}{Theorem}
\begin{document}
\title{A  Reinforcement Learning Approach for Wildfire  Tracking with UAV Swarms}

\author{Carles Diaz-Vilor,$^{1}$ Angel Lozano,$^{2}$ and Hamid Jafarkhani$^{1}$
\thanks{$^{1}$C. Diaz-Vilor and H. Jafarkhani are with the Center for Pervasive Communications and Computing, Univ. of California, Irvine. Their work was supported in part by NSF Award CNS-2209695. Emails: 
        {\tt\small \{cdiazvil, hamidj\}@uci.edu }} %
\thanks{$^{2}$A. Lozano is with Universitat Pompeu Fabra (UPF), Barcelona. His work was supported by MICIU/AEI /10.13039/501100011033 under the Maria de Maeztu Units of Excellence Programme (CEX2021-001195-M), by ICREA, and by the Fractus-UPF Chair on 6G. Email:
        {\tt\small  angel.lozano@upf.edu }}
}

\maketitle
\begin{abstract}
Suitably equipped with cameras and sensors, uncrewed aerial vehicles (UAVs) can be instrumental for wildfire prediction, tracking, and monitoring, provided that
uninterrupted connectivity can be guaranteed even if
some of the ground access points (APs) are damaged by the fire itself. A cell-free network structure, with UAVs connecting to a multiplicity of APs, 
is therefore ideal in terms of resilience.  
This work proposes a trajectory optimization framework for a UAV swarm tracking a wildfire while maintaining cell-free connectivity with ground APs. Such optimization entails a constant repositioning of the multiplicity of UAVs as the fire evolves to ensure that the best possible view is acquired and transmitted reliably, while respecting altitude limits, avoiding collisions, and proceeding to recharge batteries as needed.
Given the complexity and time-varying nature of this multi-UAV trajectory optimization, reinforcement learning is leveraged, specifically the twin-delayed deep deterministic policy gradient algorithm. The approach is shown to be highly effective for wildfire tracking and coverage and could be likewise applicable to survey other natural and man-made phenomena, including weather events, earthquakes, or chemical spills.
\end{abstract}

\begin{IEEEkeywords}
UAV, wildfire, tracking, cell-free networks, reinforcement learning, TD3
\end{IEEEkeywords}

\IEEEpeerreviewmaketitle

\section{Introduction}

Wildfires are a growing concern worldwide and
cause major 
environmental and economic damage. It is therefore crucial to predict, track, and monitor them to facilitate the actions of firefighters and emergency responders  \cite{abatzoglou2016impact,chuvieco2010development,THOMPSON20111895}.
Additionally, 
the study of wildfires can help to identify areas at risk and to understand the nature of fire propagation, along with possible causes and factors that contribute to their spread \cite{cohen2010wildland,finney2011simulation,FireCit,Finney_1998}.

An attractive means to monitor, track, and sense events is deploying uncrewed aerial vehicles (UAVs) given the (\emph{i}) availability of cameras and sensors, (\emph{ii}) wireless connectivity to ground access points (APs), and (\emph{iii}) easy control and maneuverability \cite{CTN2022,8049328,9739676,8119562,9195795}. UAVs can fly at low altitudes and collect data ranging from high-resolution images to heat signatures, subsequently conveying those to a ground network of APs \cite{9147613,7587184,7739740,seraj2022multi,yuan2017fire,yuan2015uav}. 
{We hasten to emphasize that UAVs surveying a wildfire act as end devices in the network, a role that is dual to that of flying base stations in the UAV-assistance paradigm \cite{alzahrani2020uav,oubbati2023multi,oubbati2021dispatch}}.

The transmission of the gathered data is contingent on the connectivity. Regardless of whether APs are damaged by fire, the flow of information should be guaranteed. This relates to the network's resiliency and, in that respect, it is relevant that wireless systems are
evolving from traditional cellular structures towards cell-free arrangements. These are especially appealing when a high degree of reliability is needed, given that users can then communicate with multiple APs \cite{venkatesan2007network,7827017,8097026,8845768,9043895,9247465,9064545,8952782,demir2020joint,9684861}.
Initial results have demonstrated the efficacy and benefits of a cell-free architecture for UAVs \cite{9930941,CFUAVDepl,9453784,9336017,10186347}
and hence this is the structure adopted here.

The design of deployments and trajectories for UAVs in cellular and wireless sensor networks 
is a problem of growing interest \cite{9086619,8519749,Globec,8708979, Zeng2018TrajectoryMulticasting,Wu2018JointNetworks,7509638,8698468,8302930,Cheng2018UAVCells,9199120,Zeng2017Energy-EfficientOptimization,9387137, Zeng2019EnergyUAV,9658259,CommSensUAV,9802837}, yet, to the best of our knowledge, this is the first work that considers a multi-UAV trajectory optimization to track an event with cell-free connectivity.
This gives rise to various challenges,
mainly related to the complexity and time-varying nature of the problem,  which precludes the use of classical optimization techniques. Compounding this complexity, the lifespan of UAV batteries is limited and thus, once a UAV is low on energy, a recharge is needed. Authors in \cite{ucgun2021review} provide an exhaustive review of different techniques to recharge or replace UAV batteries, along with the required time. Altogether, the problem at hand breaks down into two stages of \emph{tracking} and \emph{charging}. Two decidedly nonconvex optimization problems arise, one for each stage, with the switching from tracking to charging based on the remaining energy at each UAV.

Reinforcement learning (RL) provides a framework where an agent learns optimal policies based on interactions with the environment and on feedback in the form of rewards \cite{sutton2018reinforcement}. For settings with a large number of states and/or actions, strategies combining RL with neural networks have been devised, converging to the so-called deep Q-learning (DQL) \cite{8103164,mnih2013playing, sutton1999policy, konda1999actor, mnih2016asynchronous, lillicrap2015continuous}.
These methods have already been considered for UAV trajectory optimizations \cite{9453811,9455139,9171468,9623508,9363308}. A recent actor-critic  algorithm named twin-delayed deep deterministic policy gradient (TD3) has been shown to perform better than its predecessors 
\cite{fujimoto2018addressing} in terms of stability, exploration capabilities, handling of continuous actions, and  sampling efficiency. 
The field of UAV trajectory optimization has greatly benefited from this new algorithm \cite{9426899,9508149,9504602,10086561}, which is applied in this work as well.  The main contributions of the paper are as follows:

\begin{itemize}
    \item An analytical framework is set forth to describe the tracking of a wildfire by a swarm of UAVs equipped with cameras. 
    The connectivity between UAVs and APs is through a cell-free network, with Rician fading, channel estimation, and minimum mean square error (MMSE) reception explicitly modeled.
    \item Based on communication and mechanical constraints, the multi-UAV {tracking} and {charging}  optimization problems are formulated.
    \item For each of the two problems, the UAV trajectory and the transmit power optimizations are confronted using the TD3 approach. While, for the {tracking} stage, the objective is to monitor the wildfire perimeter ensuring a correct reception of the video/images in a timely fashion, during the {charging} stage the goal is to reach a charging point with the minimum energy expenditure.
    \item The impact on the wildfire coverage of parameters such as the number of UAVs, the allowed flying altitudes, and the UAVs' energy   is established. A similar analysis is conducted for the {charging} problem. 
\end{itemize}

The remainder of the manuscript is organized as follows. Sec. \ref{Sec:Model_Fire} presents the camera and communication models. In Sec. \ref{Sec:CF_Fire}, the cell-free connectivity is introduced while 
Sec. \ref{Sec:Problem_Fire} formalizes the two optimization problems. 
Sec. \ref{Sec:Sol_Fire} subsequently focuses on  the solution of those problems while numerical results are  discussed in Sec. \ref{Sec:SimRes_Fire}. Concluding remarks are provided in Sec. \ref{Sec:Concl_Fire}.

\begin{figure*}
    \begin{align}\label{eq:FOV}
    \mathcal{B}_m^{(n)} = \Big \{ \boldsymbol{v} =(v_x,v_y): | x_m^{(n)} - v_x| \leq h_m^{(n)} \tan(\alpha_1) \text{ and } | y_m^{(n)} - v_y| \leq h_m^{(n)} \tan(\alpha_2)\Big \} 
\end{align}
\end{figure*}

\section{System Model}\label{Sec:Model_Fire}

The system under consideration features $M$ UAVs, each equipped with a video camera. The $m$th UAV is located at $\qm^{(n)}= \big( x_{m}^{(n)}, y_m^{(n)}, h_m^{(n)}\big)$, where the height $h_m^{(n)}$ is the distance from the focal point of its camera's lens to the ground,  $n$ is the time index, and the duration of each time slot is $\delta$. 
The UAVs are served by $L$  APs with locations $\ql= ({x}_\ell,{y}_\ell, h_\ell)$. Additionally, the system contains $C$ UAV charging stations located at $\boldsymbol{c}_c = ({\rm x}_c, {\rm y}_c, {\rm h}_c)$.

\subsection{Camera Model}\label{sec:Camera}

Consider a planar environment $\mathcal{F} \in \mathbb{R}^2$ and define the field of view (FoV) as the area projected over $\mathcal{F}$ that a camera captures. For a UAV with a downward-facing camera, a rectangular FoV $\mathcal{B}_m^{(n)}$ is defined in \eqref{eq:FOV} atop the next page,
where $\alpha_1$ and $\alpha_2$ represent the two halfview angles associated with the perpendicular edges of a rectangle (see Fig. \ref{fig:FoV}). The notion of \emph{area per pixel} expresses the tradeoff between the quality of the image and the dimension of the FoV: higher-resolution images correspond to  smaller FoVs, and vice versa. From classical optics, the area per pixel is  \cite{5959179}
\begin{align}\label{eq:cameraarea}
    f\big(\qm^{(n)}, \boldsymbol{v}\big) = \left\{\begin{array}{ll} 
                a \,\left(b - h_m^{(n)} \right)^2 & \boldsymbol{v} \in  \mathcal{B}_m^{(n)}\\
                \infty & \text{otherwise} \\
                \end{array} \right.,
\end{align}
where {$b$ and $a$ depend on the camera capabilities, representing the}  focal length of the lens and the area of a pixel on the lens divided by $b^2$, respectively, while $\boldsymbol{v}$ represents the position on $\mathcal{F}$.
At higher altitudes, $f(\cdot)$ is larger because each pixel covers a bigger area on the FoV, and the resolution is lower. At lower altitudes, it is the other way around. Outside the FoV, by definition, there are no pixels and $f(\cdot)$ is therefore infinite.

The camera's sensor consists of $I$ equispaced pixels.
From sheer geometry, 
\begin{align}
   I = \frac{4 \big( b - h_m^{(n)} \big)^2 \tan (\alpha_1) \tan (\alpha_2)}{f\big(\qm^{(n)}, \boldsymbol{v}\big)}.
\end{align}
For a 24-bit RGB color system, the number of bits to represent a picture  is $24 I$ and, compressing the image with a ratio of $\rho$, the number of bits to be transmitted per image is
\begin{align}\label{eq:n_bits_img}
    B 
    & = 24 I \rho.
\end{align}

\begin{figure}[]
     \centering
     \includegraphics[scale=0.38]{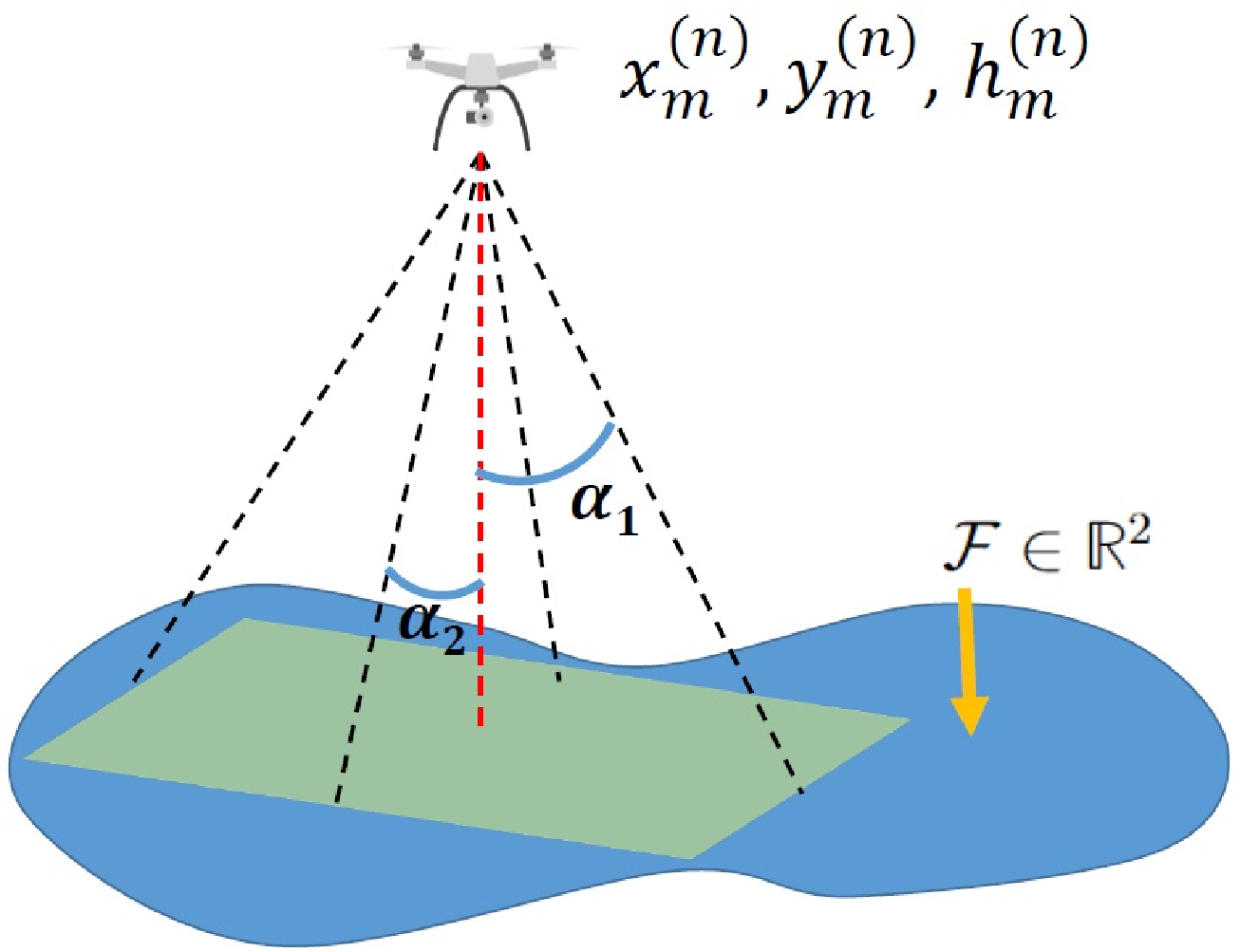}
     \caption{Rectangular FoV of the $m$th UAV for given $\alpha_1$ and $\alpha_2$ over a planar region $\mathcal{F}$.}
     \label{fig:FoV}
\end{figure}

\subsection{Channel Model}

{Upon imaging the environment, UAVs establish wireless connections with the APs to convey the captured images. The air-to-ground } channel coefficient between the $m$th UAV and the $\ell$th AP is denoted by $\gml ^{(n)}$, drawn from a Rician distribution such that \cite[Sec. 3.4.1]{heath2018foundations}
\begin{align}\label{eq:channel}
    g_{m,\ell}^{(n)}  = \sqrt{\frac{\beta_0 }{\big(d^{(n)}_{m,\ell}\big)^\kappa ( K_{m,\ell}^{(n)}+1 ) } } \left( \sqrt{K_{m,\ell}^{(n)}} e^{j \psi_{m,\ell}^{(n)}} +   {a}_{m,\ell}^{(n)}  \right),
\end{align}
where $\beta_0$ and $\kappa$ are, respectively, the pathloss at a reference distance of $1$~m and the pathloss exponent; in turn, the distance is $d_{m,\ell}^{(n)}$ and the Rician factor is
\begin{equation}
K_{m,\ell}^{(n)} = A_1 \exp \! \left( A_2  \arcsin \! \left(\frac{h_\ell - h_m^{(n)}}{d_{m,\ell}^{(n)}}\right) \right)
\end{equation}
for environment-dependent parameters $A_1$ and $A_2$ \cite{ChMeas}. The phase of the LoS component, $\psi_{m,\ell}^{(n)}$, is uniformly random to reflect drifting \cite{8952782, demir2020joint} whereas  the  small-scale fading is
\begin{equation}
{a}_{m,\ell} ^{(n)}\sim \mathcal{N}_{\mathbb{C}}(0,1).
\end{equation}
Hence, the channel power gain is given by
\begin{align}
    {r}_{m,\ell}^{(n)} 
    = \frac{\beta_0 }{\big( d^{(n)}_{m,\ell} \big)^\kappa}.
\end{align}

\subsection{Channel Acquisition}
{Unlike most of the UAV-related literature, this paper incorporates imperfect channel estimates to enhance the model's realism.} Given that the system under consideration features a small number of UAVs,  orthogonal pilots can be allocated to each of those UAVs while keeping the overhead at bay;
pilot contamination is thus not an issue.
After observing the length-$\tau$ pilot sequence transmitted by the $m$th UAV, the $\ell$th AP can compute the MMSE channel estimate $\hat{g}_{m,\ell}^{(n)}$, which  satisfies
\begin{equation}
    {g}_{m,\ell}^{(n)} = \hat{g}_{m,\ell}^{(n)} + \tilde{g}_{m,\ell}^{(n)}
\end{equation}
such that $\hgml^{(n)}$ is zero-mean with \cite{lozano2008interplay,996869}
\begin{align}
    \gamma_{m,\ell}^{(n)} & = \mathbb{E}\big\{ | \hat{g}_{m,\ell}^{(n)} |^2  \big\} \\
    & = \frac{\big(r_{m,\ell}^{(n)}\big)^2}{  {r}_{m,\ell}^{(n)} +  \frac{\sigma^2}{p^{\text{t}}\tau }},
\end{align}
given $p^{\text{t}}$ as the pilot transmit power and $\sigma^2$ as the noise power at the receiver. Finally, the estimation error $\tilde{g}_{m,\ell}^{(n)}$ is zero-mean with variance 
\begin{equation}
c_{m,\ell}^{(n)} ={r}_{m,\ell}^{(n)}- \gamma_{m,\ell}^{(n)}.
\end{equation}

\subsection{Energy Consumption Model}\label{Sec_Fire:EnergyCons}

{
UAVs are governed by certain dynamics relating location, speed, and acceleration. A first-order Taylor expansion yields [45],[46]
    \begin{align}\label{eq:linloc}
        \boldsymbol{q}_m^{(n+1)} = \boldsymbol{q}_m^{(n)} + \boldsymbol{v}_m^{(n)}\delta + \frac{1}{2}\boldsymbol{a}_m^{(n)}\delta^2,
    \end{align}
    and
    \begin{align}\label{eq:linvel}
        \boldsymbol{v}_m^{(n+1)} = \boldsymbol{v}_m^{(n)} + \boldsymbol{a}_m^{(n)}\delta,
    \end{align}
    where $\boldsymbol{v}_m^{(n)}$ and $\boldsymbol{a}_m^{(n)}$ denote the speed and acceleration vectors.}
For a quad-rotor UAV moving in 3D with a given climb angle $\tau_m^{(n)}$, {and denoting the magnitude of the speed vector by $v_m^{(n)}$, i.e., $v_m^{(n)} =  \| \boldsymbol{v}_m^{(n)} \|$}, the power consumption can be modeled as \cite{Zeng2019EnergyUAV,9171468}
\begin{align}\label{eq:energy_cons}
    P_m^{(n)} & = P_0 \Bigg(1+ \frac{3v_m^{(n) 2}}{U_{\rm tip}^2}\Bigg) + P_i \Bigg( \sqrt{1 + \frac{v_m^{(n) 4}}{4v_0^4}} - \frac{v_m^{(n) 2}}{2v_0^2} \Bigg)^{\! 1/2} \nonumber \\
    & \quad + \frac{1}{2}d_0 \iota s A v_m^{(n) 3} + mgv^{(n)}_m \sin \tau_m^{(n)},
\end{align}
where $P_0$ and $P_i$ represent the blade profile and induced powers, respectively, $U_{\rm tip}$ is the tip speed of the rotor blade, and $v_0$ is the mean rotor induced velocity. In addition, $d_0$ and $s$ are   the fuselage drag ratio and rotor solidity,
respectively, while $\iota$ denotes the air density and $A$ the rotor disc area.

As advanced, the UAV's mission is divided into {tracking} and {charging} stages. The switching
occurs at $n=n_m$, once the UAV's energy falls below a threshold  that is a function of the distance between the UAV and the closest charging point (see Sec. \ref{Sec:SimRes_Fire}). Consequently, the UAV needs to head for a charging station and the remaining energy, denoted by $E_{m}^{(n_m)}$, must satisfy

\begin{align}\label{ct:energy}
    E_{m}^{(n_m)} - \!\!\!\!  \sum \limits_{n=n_{m}}^{n_{m}+N_{c_m}} \!\! P_m^{(n)} \delta  \geq 0,
\end{align}
to avoid running out of energy, where $N_{c_m}$ is the number of  slots it takes for the $m$th UAV to reach charging station $\boldsymbol{c}_c$.

Other constraints include the maximum and minimum UAV altitudes, a maximum velocity and acceleration, and the avoidance of collisions:
\begin{align}\label{ct:altitude}
    h_{\rm min}  \leq h_m^{(n)} & \leq h_{\rm max} \\
\label{ct:velBS}
    {v}_m^{(n+1)} & \leq V_{\rm max} \delta \\
\label{ct:accel}
    \| \boldsymbol{v}_m^{(n+1)} - \boldsymbol{v}_m^{(n)} \| & \leq A_{\rm max} \delta \\
\label{ct:collision_av}
    {\rm d}_{m,j}^{(n)} & \geq D_{\rm safe},
\end{align}
where $V_{\rm max}$ and $D_{\rm safe}$ are the maximum UAV velocity and  minimum  safety distance among UAVs, respectively, ${\rm d}_{m,j}^{(n)}$ is the distance between the $m$th and $j$th UAVs, { and $A_{\rm max}$ is the maximum acceleration}.

\section{Cell-free Connectivity}\label{Sec:CF_Fire}

On a given time-frequency resource unit, the number of active APs is $L^{(n)}$, which is a function of time. The uplink channel matrix is then
\begin{equation}
    \Gm^{(n)} = \begin{pmatrix}
        \boldsymbol{g}_{1}^{(n)} , \dots , \boldsymbol{g}_M^{(n)} 
  \end{pmatrix},
\end{equation}
where $\boldsymbol{g}_m^{(n)} \in \mathbb{C}^{L^{(n)} \times 1}$ is the channel between the $m$th UAV and the active APs, satisfying
\begin{equation}
    \Gm^{(n)} = \Gmh^{(n)} + \Gmt^{(n)},
\end{equation}
where $\Gmh^{(n)}$ and $\Gmt^{(n)}$ are the channel estimate and error matrices, respectively. {In general, subsets of APs can provide service to subsets of UAVs \cite{9043895,9930941,10186347}, yet this work considers full connectivity, for the sake of the simplicity of notation,} pooling the observations from the $L^{(n)}$ APs into
\begin{align}\label{eq:complete}
    \yb^{(n)} & =  \Gm^{(n)} \x^{(n)} +\boldsymbol{n}^{(n)}  \\
    & = \underbrace{ \Gmh^{(n)} \x^{(n)}}_\text{signal}  + \underbrace{ \Gmt^{(n)}  \x^{(n)} +  \boldsymbol{n}^{(n)}}_\text{ effective noise: $\boldsymbol{n}_{\rm e}^{(n)}$ },
\end{align}
where
\begin{equation}
\x^{(n)} = \left( \sqrt{p_1^{(n)}}s_1^{(n)},\dots, \sqrt{p_M^{(n)}} s_M^{(n)} \right)^{\! \mathrm{T}},
\end{equation}
with unit power symbols $s_m^{(n)}$ while $p_m^{(n)}$ denotes the transmit powers and the noise is  $\boldsymbol{n}^{(n)} \sim \mathcal{N}_{\mathbb{C}}(\boldsymbol{0},\sigma^2 \boldsymbol{{I}})$. The effective noise $\boldsymbol{n}_{\rm e}^{(n)}$ has covariance $\boldsymbol{\Sigma}^{(n)} 
= \boldsymbol{{D}}^{(n)} + \sigma^2 \boldsymbol{{I}}$ given
\begin{align}
    \boldsymbol{{D}}^{(n)} & = \mathbb{E} \! \left \{ \big( \Gmt^{(n)} \x^{(n)} \big)\big( \Gmt ^{(n)} \x^{(n)} \big) ^* \right \}  \\
    & = 
    \text{diag} \! \left\{ \sum \limits_{m=1}^M c_{m,1}^{(n)}p_m^{(n)} , \dots , \sum \limits_{m=1}^M c_{m,L}^{(n)}p_m^{(n)} \right \}.  
\end{align}
Upon observing $\boldsymbol{y}^{(n)}$,  the combiner that maximizes the signal-to-interference-plus-noise ratio (SINR) is the MMSE filter, achieving a value of \cite{9043895} 
\begin{align}
    \mathrm{SINR}_m^{(n)}
    & = \boldsymbol{\hat{g}}_m^{(n) *} \left(   \sum \limits_{n \neq m} \boldsymbol{\hat{g}}_{n}^{(n)}\boldsymbol{\hat{g}}_{n}^{(n) *} p_n^{(n)} + \boldsymbol{\Sigma}^{(n)} \! \right)^{\!\! -1} \!\!\! \boldsymbol{\hat{g}}_m^{(n)} \, p_m^{(n)},\label{eq:SINR}
\end{align}
giving a spectral efficiency of
\begin{equation}\label{eq:SEff}
    \mathrm{SE}_m^{(n)} = \left(1 - \frac{\tau}{\tau_{\rm c}} \right) \mathbb{E}\{ \log_2 ( 1 + \mathrm{SINR}_m^{(n)} )\},
\end{equation}
where $ \tau / \tau_{\rm c}$ accounts for the pilot overhead {and $ \tau_c$ denotes the total number of resource units within a coherence block}. Random matrix theory offers a way to circumvent the numerical evaluations of the above expectation, providing stable spectral efficiency forms that depend solely on large-scale parameters. 

\subsection{Large-Dimensional Analysis}\label{Sec:LargeDimFDMA}

When evaluating \eqref{eq:SEff} for $M,L^{(n)}\to \infty$, convergence to nonrandom limits
is assured provided that
\begin{align}
    \boldsymbol{\Gamma}_m^{(n)} & = \mathbb{E} \Big\{  \boldsymbol{\hat{g}}_{m}^{(n)} \boldsymbol{\hat{g}}_{m}  ^{(n) *} \Big\}  \\
     & = \mathrm{diag} \big \{ \gamma_{m,\ell}^{(n)} \; \forall \ell \big \}
\end{align}
satisfies some technical conditions. Specifically, the inverse of the resolvent matrix in \eqref{eq:SINR} must exist, which is ensured by the presence of $\boldsymbol{\Sigma}^{(n)}$, while $\boldsymbol{\Gamma}_m^{(n)}$  must have uniformly bounded spectral norm, meaning that the received power does not concentrate on a subset of dimensions as the network grows large.

\begin{theorem}\label{prop:MMSE_Fire}
For  $M,L^{(n)}\to \infty$ with MMSE reception, $\mathrm{SINR}_m^{(n)} - \overline{\mathrm{SINR}}_m^{(n)} \to 0$ almost surely (a.s.) with
\begin{equation}\label{eq:rmt_mmse_fire}
    \overline{\mathrm{SINR}}_m^{(n)} =  \sum \limits_{\ell=1}^{L^{(n)}}    \frac{\gamma_{m,\ell}^{(n)} p_m^{(n)}}{\sum \limits_{i \neq m}\frac{\gamma_{i,\ell}^{(n)}}{1 + e_{i}}p_i^{(n)} +  \sum \limits_{\forall i}c_{i,\ell}^{(n)}p_i^{(n)} +   \sigma^2}.
\end{equation}
The coefficients $e_{j}$ are obtained iteratively with $e_{j} = \lim_{s \to \infty} e_{j}^{(s)}$, $e_{j}^{(0)} = L^{(n)}$, and 
\begin{equation}
    e_{j}^{(s)} = p_j^{(n)} \, \mathrm{tr} \Bigg[ \boldsymbol{\Gamma}_j^{(n)} \bigg(  \sum \limits_{i \neq j }^M \frac{\boldsymbol{\Gamma}_i^{(n)}}{1+e_{i}^{(s-1)}} \, p_i^{(n)}  +  \boldsymbol{\Sigma}^{(n)} \bigg)^{\!-1}  \Bigg] .
\end{equation}
\end{theorem}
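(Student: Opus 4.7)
The plan is to derive the deterministic equivalent of the SINR in \eqref{eq:SINR} through the standard random matrix theory machinery for MMSE receivers with non-identically distributed channel vectors, in the Wagner--Couillet--Debbah spirit.

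First, I would rewrite the SINR compactly as $p_m^{(n)}\,\hat{\boldsymbol{g}}_m^{(n)*}\boldsymbol{Q}_m^{-1}\hat{\boldsymbol{g}}_m^{(n)}$, where $\boldsymbol{Q}_m=\sum_{i\ne m}\hat{\boldsymbol{g}}_i^{(n)}\hat{\boldsymbol{g}}_i^{(n)*}p_i^{(n)}+\boldsymbol{\Sigma}^{(n)}$. Because pilots are orthogonal and the small-scale fading and LoS phases are independent across UAVs and APs, the vector $\hat{\boldsymbol{g}}_m^{(n)}$ is statistically independent of $\boldsymbol{Q}_m$, has independent entries across $\ell$, and by the MMSE property is zero-mean with diagonal covariance $\boldsymbol{\Gamma}_m^{(n)}=\mathrm{diag}\{\gamma_{m,\ell}^{(n)}\}$. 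The matrix $\boldsymbol{Q}_m$ is a.s. invertible thanks to $\boldsymbol{\Sigma}^{(n)}\succeq\sigma^2\boldsymbol{I}$.

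Next, I would invoke the trace lemma in its version for vectors with independent but non-identically distributed entries of bounded moments: for any $\boldsymbol{A}$ independent of such a zero-mean vector $\boldsymbol{x}$ with covariance $\boldsymbol{\Theta}$ of uniformly bounded spectral norm, $\boldsymbol{x}^*\boldsymbol{A}\boldsymbol{x}-\mathrm{tr}(\boldsymbol{A}\boldsymbol{\Theta})\to 0$ a.s. This immediately yields $\mathrm{SINR}_m^{(n)}-p_m^{(n)}\mathrm{tr}(\boldsymbol{\Gamma}_m^{(n)}\boldsymbol{Q}_m^{-1})\to 0$ a.s. I would then apply the deterministic-equivalent theorem for sums of independent rank-one matrices with distinct covariances, which replaces $\boldsymbol{Q}_m^{-1}$ by $\boldsymbol{T}_m^{-1}$, where $\boldsymbol{T}_m=\sum_{i\ne m}\frac{\boldsymbol{\Gamma}_i^{(n)}p_i^{(n)}}{1+e_i}+\boldsymbol{\Sigma}^{(n)}$ and the $e_i$ are the fixed point of the iteration displayed in the theorem. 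Because every $\boldsymbol{\Gamma}_i^{(n)}$ is diagonal and $\boldsymbol{\Sigma}^{(n)}=\mathrm{diag}\{\sum_i c_{i,\ell}^{(n)}p_i^{(n)}\}+\sigma^2\boldsymbol{I}$ is diagonal, $\boldsymbol{T}_m$ is diagonal as well, so $p_m^{(n)}\mathrm{tr}(\boldsymbol{\Gamma}_m^{(n)}\boldsymbol{T}_m^{-1})$ collapses into the summation over $\ell$ of reciprocals of its diagonal entries, which is precisely \eqref{eq:rmt_mmse_fire}. The well-posedness of the fixed-point system would be closed by checking that the map $e_j^{(s-1)}\mapsto e_j^{(s)}$ is a standard interference function in the Yates sense (positive, monotone, scalable), guaranteeing uniqueness and global convergence of the iteration from any positive initialization, in particular from $e_j^{(0)}=L^{(n)}$.

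The hard part will be invoking the deterministic equivalent rigorously: one must check that $M$ and $L^{(n)}$ grow at a commensurate rate, that the spectral norms $\|\boldsymbol{\Gamma}_i^{(n)}\|$ remain uniformly bounded (which follows from $h_m^{(n)}\ge h_{\min}$ capping the pathloss and hence $r_{i,\ell}^{(n)}$), and that $\|\boldsymbol{\Sigma}^{(n)-1}\|\le\sigma^{-2}$. The per-entry heterogeneity of $\hat{\boldsymbol{g}}_i^{(n)}$ prevents one from appealing to the classical Bai--Silverstein i.i.d. result and forces the use of the more general deterministic-equivalent framework, where the bookkeeping for the \emph{excluded} index $m$ in $\boldsymbol{Q}_m$ versus the \emph{full} index set in the $c_{i,\ell}^{(n)}$ contribution to $\boldsymbol{\Sigma}^{(n)}$ must be handled with care to recover the asymmetric structure of the denominator in \eqref{eq:rmt_mmse_fire}.
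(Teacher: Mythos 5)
Your proposal is correct and follows essentially the same route as the paper's proof: write $\mathrm{SINR}_m^{(n)}$ as a quadratic form in $\hat{\boldsymbol{g}}_m^{(n)}$ with the resolvent of the interference-plus-noise matrix that excludes UAV $m$, apply the trace lemma together with the Wagner--Couillet--Debbah deterministic equivalent (the paper's cited result, reproduced as Theorem 2 in the appendix, with the mapping $\boldsymbol{D}=\boldsymbol{\Gamma}_m p_m$, $\boldsymbol{R}_j=\boldsymbol{\Gamma}_j p_j$, $\boldsymbol{S}+z\boldsymbol{I}=\tfrac{1}{L}\boldsymbol{\Sigma}$), and exploit the diagonality of $\boldsymbol{\Gamma}_m^{(n)}$ and $\boldsymbol{\Sigma}^{(n)}$ to collapse the trace into the per-AP sum in \eqref{eq:rmt_mmse_fire}. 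The only (minor) divergence is that you justify convergence and uniqueness of the fixed-point iteration via a standard-interference-function argument, whereas the paper simply invokes the convergence proof supplied with the cited random-matrix result; both are adequate.
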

\begin{proof}
See Appendix \ref{proof:SINR_Fire}.
\end{proof}

As anticipated, \eqref{eq:rmt_mmse_fire} is a deterministic quantity that depends only on large-scale parameters.
From the continuous mapping theorem \cite{Mann1943OnSL}, it follows that
\begin{equation}
    \mathrm{SE}_m^{(n)}- \mathrm{\overline{SE}}_m^{(n)} \stackrel{a.s.}{\to} 0,
\end{equation}
where
\begin{equation}
\mathrm{\overline{SE}}_m^{(n)} = \left(1 - \frac{\tau}{\tau_{\rm c}} \right)   \log_2 \! \left( 1 +  \overline{\mathrm{SINR}}_m^{(n)} \right)    
\end{equation}
is the asymptotic spectral efficiency.

\section{Problem Formulation}\label{Sec:Problem_Fire}

Let us now turn to optimizing the UAV trajectories and the transmit powers.
In the tracking stage, the UAVs in the set $\mathcal{S}_1^{(n)}$ have sufficient energy to track while, in the charging stage, the UAVs in the set $\mathcal{S}_2^{(n)}$ attempt to reach one of the $C$ charging points as expeditiously as possible.
Given the different nature of the two stages, separate optimizations  are devised.

\subsection{Tracking Stage}

During the tracking stage, UAVs periodically capture images, to be forwarded to the APs every $N$ slots. To keep the information fresh on average, the received image bit content must satisfy
\begin{align}\label{ct:Rate}
    \delta W \mathrm{\overline{SE}}_m^{(n)} \geq \frac{B}{N}
\end{align}
where $B$ is given in \eqref{eq:n_bits_img} and $W$ is the transmission bandwidth. Note that this conservatively assumes that successive transmissions involve completely new images.

With (\ref{ct:Rate}) as a constraint that adds to those on altitude, speed, and collision-avoidance,
the optimization of trajectories and powers can be cast as the minimization of (a weighted version of) the area per pixel.
As discussed in Sec. \ref{sec:Camera}, the area per pixel reflects the tradeoff between image quality and FoV.
Before proceeding, though, the definition of area per pixel must be extended from a  single UAV to multiple UAVs, as their FoV may intersect. 
If multiple cameras cover a certain point $\boldsymbol{v}$,  the area per pixel is \cite{5959179}
\begin{align}\label{eq:mult_camera}
    \tilde{f}\Big( \qm^{(n)}: m \in \mathcal{S}_1^{(n)}, \boldsymbol{v} \Big) = \frac{1} 
{ \sum_{i \in \mathcal{N}_{\boldsymbol{v}}} f\big(\boldsymbol{q}_i^{(n)}, \boldsymbol{v}\big)^{-1} + \Delta },
\end{align}
where $\mathcal{N}_{\boldsymbol{v}}$ is the set of UAVs whose cameras cover $\boldsymbol{v}$ while $\Delta$ is a regularization factor that prevents the function from diverging if none of the UAVs covers $\boldsymbol{v}$. {Therefore, the joint optimization of the UAVs in tracking mode}  emerges as
\begin{equation}\label{opt:ProblemTrack}
    \begin{aligned}
    \min_{\qm^{(n)}, p_m^{(n)} } 
    & \int_{\mathcal{F}}  \tilde{f}\Big( \qm^{(n)}: m \in \mathcal{S}_1^{(n)}, \boldsymbol{v} \Big)  \psi(\boldsymbol{v},n) d\boldsymbol{v}, \\ 
    \textrm{s.t.} \quad  &  p_m^{(n)} \leq p_{\rm max} \\
    & \eqref{eq:linloc}, \eqref{eq:linvel}, \eqref{ct:altitude}\text{-}\eqref{ct:collision_av}, \eqref{ct:Rate}
    \end{aligned}
\end{equation}
where 
$\psi : \mathbb{R}^{2} \times \mathbb{N}\rightarrow [0,\infty)$ is a density that weights the points on the FoV, assigning them relative importance (see Sec. \ref{sec:FireDyn}). As reported in Sec. \ref{Sec:SimRes_Fire}, the solution to the above problem also maximizes the coverage, defined as {the ratio between the points with positive density that lie inside the UAVs' FoV and the total number of points with positive density: } 
\begin{align}
    {\rm Coverage}^{(n)} = \frac{ \int_{\mathcal{F}}  \mathbb{I} \! \left \{ \psi(\boldsymbol{v},n) > 0 \right \} \mathbb{I} \! \left \{ \exists \mathcal{B}_m^{(n)}: \boldsymbol{v} \in \mathcal{B}_m^{(n)}  \right \} d\boldsymbol{v}}{\int_{\mathcal{F}}  \mathbb{I} \! \left \{ \psi(\boldsymbol{v},n) > 0 \right \} d\boldsymbol{v} },\label{eq:cov}
\end{align}
where $\mathbb{I}\{\cdot\}$ is an indicator function whose value is one when the argument's condition is true and zero otherwise.
\subsection{Charging Stage}

During the charging stage, a UAV tries to reach one of the $C$ charging stations. For the $m$th UAV, the closest one is
\begin{align}\label{opt:ch_point}
    c_m = \arg \min_c \|\boldsymbol{q}_m^{(n_{m})} - \boldsymbol{c}_c \| ,
\end{align}
where $n_{m}$, recall, is the time slot in which the $m$th UAV switches from tracking to charging. {In an unconstrained setup, trivial straight-line trajectories would be optimal, but the actual solution is subject to constraints.} In particular, choosing to minimize the energy consumption, 
the UAV trajectories are optimized via 
\begin{equation}\label{opt:ProblemCharge}
    \begin{aligned}
    \min_{\qm^{(n)}, N_{c_m}}  
    & \ls   \sum \limits_{n=n_{m}}^{n_{m}+N_{c_m}} P_m^{(n)} \delta,    \\ 
    \textrm{s.t.} \quad \eqref{eq:linloc}, \eqref{eq:linvel}, &   \eqref{ct:energy} \text{-} \eqref{ct:collision_av} \\
    \end{aligned}
\end{equation}
where $N_{c_m}$ is introduced in Sec. \ref{Sec_Fire:EnergyCons}, $ P_m^{(n)}$ is defined in \eqref{eq:energy_cons}, and the constraints ensure sufficient energy to reach a charging point, the avoidance of collisions, and the respect of altitude and velocity limits. The switching between tracking and charging, recall, is controlled by the remaining energy  at the UAV, as furthered 
in Sec. \ref{Sec:SimRes_Fire}. 

\subsection{Fire Dynamics}\label{sec:FireDyn}

The density function $\psi(\boldsymbol{v},n)$ depends on the event being tracked {and can be generated either synthetically through a model or constructed from actual data}. While the solution presented in the next section is valid for any generic density function, for wildfires this function should encode certain features that model the spread of a fire. Wildfire simulation is an active field of research in itself \cite{FireCit}. FARSITE, an established model employed by government agencies, is adopted here
\cite{Finney_1998}. It dictates that the spread of every ignition  originally follows an ellipse. Subsequently, the points on the ellipse serve as new fronts as per Huygens principle \cite{Finney_1998}, each such front point growing as a new ellipse. The set of new front points is defined by the combination of the local ellipses, i.e., their convex hull \cite[Fig. 1]{Finney_1998}. Because of the different local weather conditions, fuels, or terrain, the dimensions of the local ellipses can be different and result in a new front that is no longer elliptic.
Let us define $U^{(n)}$ [m/s] and $\theta_{\text{wind}}^{(n)}$ as the midflame wind's speed and direction, respectively, modelled as
\begin{align}
U^{(n)} & \sim |\mathcal{N}(U_0, \sigma_U) | \\
\theta_{\text{wind}}^{(n)} &  \sim \mathcal{N}(\overline{\theta}_{\text{wind}}, \sigma_\text{wind}).
\end{align}
Generating the ellipse corresponding to each front point requires its minor and major axes, $2a^{(n)}$ and $2b^{(n)}$, respectively, with
\begin{align}
    a^{(n)} & = \frac{1}{2 \, \mathsf{LB}^{(n)}} \left( \mathsf{R} + \frac{\mathsf{R}}{\mathsf{HB}^{(n)}} \right) \\
    b^{(n)} & = \frac{1}{2 } \left( \mathsf{R} + \frac{\mathsf{R}}{\mathsf{HB}^{(n)}} \right),
\end{align}
where $\mathsf{R}$ [m/min] is the fire's steady-state spreading rate and 
\begin{align}
    \mathsf{LB}^{(n)} & = 0.936 \, e^{0.2566U^{(n)}} + 0.461 \, e^{–0.1548U^{(n)}} – 0.397 \\
    \mathsf{HB}^{(n)} & = \frac{\mathsf{LB}^{(n)} + \sqrt{ [ \mathsf{LB}^{(n)}]^2 – 1 } }{\mathsf{LB}^{(n)} \, – \, \sqrt{ [\mathsf{LB}^{(n)}]^2 – 1}} .
\end{align}
Hence, at time $n+1$, the $i$th front point, denoted by $\boldsymbol{z}_i^{(n)} \in \mathbb{R}^2$, generates the ellipse  
\begin{align}\label{eq:ellipsep}
     \boldsymbol{z}_i^{(n)} +  \delta 
    \begin{bmatrix}
       c_x^{(n)}  \sin \theta_{\text{wind}}^{(n)} +  a^{(n)} \cos \omega\\
       c_y^{(n)}  \cos \theta_{\text{wind}}^{(n)} +  b^{(n)} \sin \omega
   \end{bmatrix},
\end{align}
where $0 \leq \omega \leq 2\pi$ whereas  $c_x^{(n)}$ and $c_y^{(n)}$ represent the fire spreading gradients, respectively. As the exact FARSITE implementation requires detailed information about the environment, and with a view to a general formulation,
we consider a simplified set of FARSITE parameters that has been widely used in the literature \cite{9147613,7587184}. In this simplified model \cite{Finney_1998}
\begin{align}
     c_x^{(n)}=c_y^{(n)}= \frac{\mathsf{R}}{2}  \left( 1 – \frac{1}{\mathsf{HB}^{(n)}} \right). 
\end{align}
With that, a 2D histogram can be produced to serve as the density function $\psi(\boldsymbol{v},n)$, assigning a nonzero weight to the points of the fire's perimeter being tracked. {Fig. \ref{fig:Farsite}  illustrates the expansion of a wildfire from an ignition point (black dot) using FARSITE. Assuming wind in the direction of the red arrow, an ellipse with minor and major axes, $2a^{(n)}$ and $2b^{(n)}$, respectively, represents the next fire front as shown in Fig. \ref{fig:Farsite}b. Then, each blue point in the new fire front acts like a new  ignition point and the convex hull of the corresponding ellipses generates the new  fire front in Fig. \ref{fig:Farsite}d. Here, the density function $\psi(\boldsymbol{v},n)$ corresponds to the perimeter, represented by the red curve. }

\begin{figure*}[!htb]
     \centering
     \includegraphics[scale=0.6]{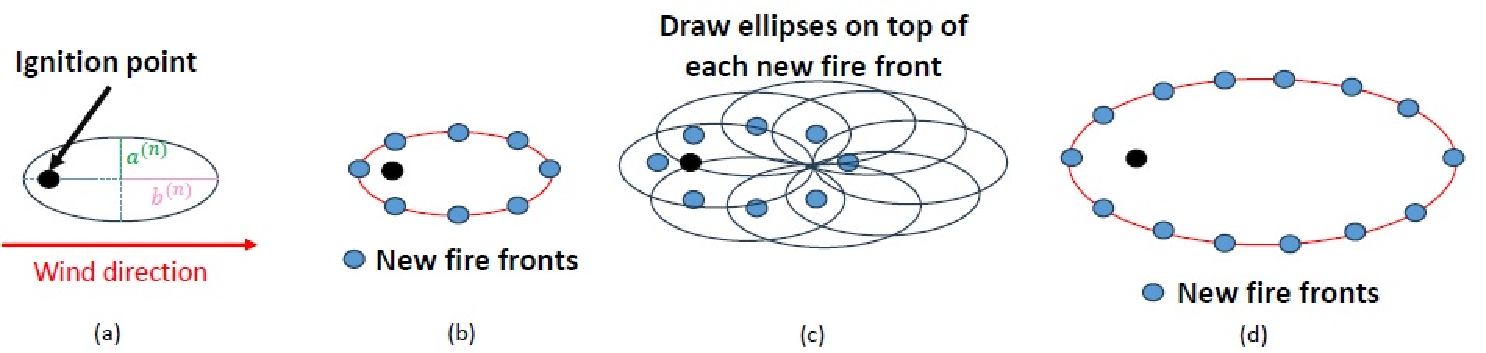}
     \caption{Summary of FARSITE fire propagation model.}
     \label{fig:Farsite}
\end{figure*}

\section{Proposed Solution}\label{Sec:Sol_Fire}
\subsection{DQL Fundamentals}
Solving \eqref{opt:ProblemTrack} and \eqref{opt:ProblemCharge} is challenging due to (\emph{i}) the variety of constraints, (\emph{ii}) the lack of convexity with respect to the optimization variables, and (\emph{iii}) the time-varying nature of the problem. However, the optimization can be formulated as a Markov decision process (MDP) problem, for which RL is an appropriate solution. In general, an MDP is defined by the tuple $(\mathcal{S},\mathcal{A},\mathcal{P},R,\gamma)$, where $\mathcal{S}$ and $\mathcal{A}$ are the state and action spaces, respectively. In turn, $\mathcal{P}$ denotes the state transition probability with $P \! \left(s^{(n+1)}|s^{(n)},a^{(n)}\right)$ being the probability of transitioning from state $s^{(n)}$ to $s^{(n+1)}$ after action $a^{(n)}$. Finally, $R$ is the reward function and $\gamma \in [0,1]$ denotes the discount factor in such function. According to the MDP,
\begin{align}
    P \! \left(s^{(n+1)}|s^{(0)},a^{(0)},\ldots,  s^{(n)},a^{(n)}\right) = P \!
 \left(s^{(n+1)}|s^{(n)},a^{(n)}\right) \nonumber
\end{align}
and
\begin{align}
    R \! \left(s^{(n+1)}|s^{(0)},a^{(0)},\dots,  s^{(n)},a^{(n)} \right) & = R 
\! \left(s^{(n+1)}|s^{(n)},a^{(n)}\right) \nonumber \\
    & = r \! \left(s^{(n)},a^{(n)}\right),
\end{align}
where $r(\cdot)$ is introduced as shorthand notation. The goal in RL is to find a policy $\pi: \mathcal{S} \rightarrow \mathcal{A}$ that maximizes the average discounted reward,
\begin{align}\label{eq:opt_policy1}
    \pi^* = \arg \max_{\pi} \mathbb{E} \! \left \{ \sum \limits_{j=0}^\infty \gamma^{j} r \! \left(s^{(j)},a^{(j)}\right) \bigg| \pi  \right \},
\end{align}
with expectation over $s^{(0)} \sim P \! \left(s^{(0)}\right)$, $a^{(n)} \sim \pi \! \left(\cdot | s^{(n)}\right)$ and $s^{(n+1)} \sim P \! \left(s^{(n+1)}|s^{(n)},a^{(n)}\right)$.
In RL, the Q-function measures the expected future reward when the system is in state $s$, performs action $a$, and follows policy $\pi$, namely
 \begin{align}
     Q^{\pi}(s,a) = \mathbb{E}_{\pi} \! \left \{ \sum \limits_{j=0}^\infty \gamma^{j} r \left(s^{(j')},a^{(j')} \right) \bigg| s^{(n)} = s, a^{(n)} = a  \right \}.
 \end{align}
where $j'= j+n+1$. An interesting property of the Q-function is that every optimum policy  $\pi^*$ achieves the optimum Q-function, i.e.,
\begin{align}
    Q^{\pi^*\!}(s,a) = Q^*(s,a),
\end{align}
where $*$ denotes optimality. Hence, one can identify the optimum policy through the Q-function. {In fact, through dynamic programming and the Bellman equation, learning the Q-function is possible. Under the temporal difference (TD)
learning solution, which combines the current estimate of the Q-function with samples obtained from the environment, the estimates can be iteratively updated following \cite{sutton2018reinforcement}}
\begin{align}\label{eq:q-fun_up}
    Q(s,a) \leftarrow (1-\alpha) Q(s,a) + \alpha \left[ r(s,a) + \gamma \max_{a'}Q(s',a') \right].
\end{align}
Although \eqref{eq:q-fun_up} provides a straightforward rule to learn the Q-function, the corresponding learning process is highly dependent
on the state and action space dimensions, as every pair $(s,a)$ needs to be explored many times.
Multi-UAV communication problems, with their complex and dynamic environments, and high-dimensional state and action spaces, motivate the use of DQL. 
This leverages deep neural networks to approximate the Q-value function, with a slight abuse of the notation, as $Q(s,a) \approx Q(s,a ;\boldsymbol{\theta})$ where 
 $\boldsymbol{\theta}$ are the neural network parameters tuned to minimize
\begin{align}\label{eq:loss_DNN}
    L(\boldsymbol{\theta} ) = \mathbb{E} \! \left\{ | y(s,a) - Q(s,a;\boldsymbol{\theta}) |^2 \right \},
\end{align}
where $y(s,a)$ is the target value defined as
\begin{align}\label{eq:out_DNN}
    y(s,a) = r(s,a) + \gamma \max_{a'}Q(s',a'; \boldsymbol{\theta}).
\end{align}
Still, DQL suffers from a major limitation as it considers a discrete action space.
To 
consider continuous  actions, we adopt TD3, a policy-based 
model-free algorithm \cite{fujimoto2018addressing}. TD3 leverages an actor-critic architecture, with both modules composed by neural networks. On the one hand, the critic aims at learning the Q-function, which is exploited by the actor to know how beneficial an action is. On the other hand, the actor's role is to learn a policy that, as opposed to traditional Q-learning, outputs continuous action values. Such policy is parameterized by $\boldsymbol{\phi}$,
the actor's neural network parameters, that are updated by the gradient computed as
\begin{align}\label{eq:PG_Grad}
    \nabla J(\boldsymbol{\phi}) = \mathbb{E} \! \left \{\nabla_{\boldsymbol{\phi}} \pi_{\boldsymbol{\phi}}(s)\  \nabla_a Q(s,a;\boldsymbol{\theta}) |_{a=\pi_{\boldsymbol{\phi}}(s)}  \right \},
\end{align}
where
\begin{align}
    J(\boldsymbol{\phi}) = \mathbb{E}_\pi \! \left \{ \sum \limits_{j=0}^\infty \gamma^{j} r \! \left(s^{(j)},a^{(j)}\right)  \right \} .
\end{align}
In addition, DQL algorithms typically include the so-called target networks. These are copies of the original networks whose parameters remain frozen over a number of iterations. Then, a soft-update rule is applied. Concretely, we denote by $\boldsymbol{\theta}'$ and $\boldsymbol{\phi}'$ the critic and actor neural network target parameters, respectively, with update rules
\begin{align}\label{eq:update1}
    \boldsymbol{\theta}^{'} \leftarrow (1-\tau_{\rm T}) \boldsymbol{\theta}^{'} + \tau_{\rm T}\boldsymbol{\theta},
\end{align} 
\begin{align}\label{eq:update2}
    \boldsymbol{\phi}^{'} \leftarrow (1-\tau_{\rm T}) \boldsymbol{\phi}^{'} + \tau_{\rm T}\boldsymbol{\phi},
\end{align}
where $\tau_{\rm T}$ controls the  memory of the updates. The inclusion of target networks addresses the moving target issue that arises when Q-values and target values are obtained using the same neural network and are constantly changing, creating a moving target for the learning; the algorithm's goal then keeps varying as the agent learns, hampering the finding of a solution.

Relative to previous actor-critic methods,  TD3  introduces improvements that enhance the stability during learning:

\begin{enumerate}
\item It features two critics, and respective target critic networks. This addresses the overestimation of Q-values caused by the use in previous methods of the maximum action value to approximate the maximum expected action value, as per \eqref{eq:out_DNN}. In TD3, the agent selects the Q-target network with smaller Q-value to construct the target value as
    \begin{align}
        y(s,a) = r(s,a) + \gamma \min_{i=1,2}Q(s',a';\boldsymbol{\theta}_i').
    \end{align}

\item To reduce the accumulation of residual errors, the policy network parameters are updated less frequently than the Q-function parameters. Precisely, \cite{fujimoto2018addressing} suggests a policy update for every two Q-function updates. 

\item To prevent overfitting to specific estimated values, TD3 adds clipped random noise to the target actions,
\begin{align}\label{eq:a_noisy}
        \hat{a} = \pi_{\boldsymbol{\phi}'}(s) + \hat{\epsilon},
    \end{align}
    where $\hat{\epsilon} \sim \text{clip} \big(\mathcal{N}(0,\hat{\sigma}_a),-\hat{\epsilon}_{\rm max},\hat{\epsilon}_{\rm max} \big)$, i.e., the values outside the interval $[-\hat{\epsilon}_{\rm max},\hat{\epsilon}_{\rm max}]$ are clipped to the interval edges. 
\end{enumerate}

TD3 is applicable to  \eqref{opt:ProblemTrack} and \eqref{opt:ProblemCharge}. Next, we define the states, actions, and rewards for each problem with one agent coordinating a swarm of $M$ UAVs. 

\subsection{TD3: Tracking Stage}

To solve \eqref{opt:ProblemTrack}, {single-agent or multi-agent algorithms can be employed. A single-agent solution would coordinate the entire swarm and present a significant challenge due to the exponential growth of state and action spaces with the number of UAVs. Alternatively, large MDPs can be factored into simpler ones, leading to simpler agents and distributed solutions \cite{NIPS2001_7af6266c,factoredmdp}. Accordingly, we use the multi-agent solution, where each UAV is controlled by a different agent. } The following is needed:
\begin{itemize}
    \item The state space, which includes information about (\emph{i}) the UAV locations, (\emph{ii}) the portion of the image bit content that has not yet been transmitted, and (\emph{iii}) the state of the fire coverage. 
    To quantify the latter, let us compute the center of mass of the area per pixel as per \eqref{eq:masscent}.
     \begin{figure*}
        \begin{align}\label{eq:masscent}
            \big( {\rm x}_{\rm c}^{(n)}, {\rm y}_{\rm c}^{(n)} \big ) = \frac{\int_{\mathcal{F}}  \boldsymbol{v}\tilde{f}\Big( \qm^{(n)}: m \in \mathcal{S}_1^{(n)}, \boldsymbol{v} \Big)  \psi(\boldsymbol{v},n) d\boldsymbol{v}}{\int_{\mathcal{F}}  \tilde{f}\Big( \qm^{(n)}: m \in \mathcal{S}_1^{(n)}, \boldsymbol{v} \Big)  \psi(\boldsymbol{v},n) d\boldsymbol{v}}
        \end{align}
    \end{figure*}
If important points remain uncovered,  the agent learns to move in that direction. The untransmitted portion of the image bit content is included in the state space as
    \begin{align}
        i_m^{(n)} = \max \bigg\{0, \bigg(1 - \frac{\delta W \mathrm{\overline{SE}}_m^{(n)}}{B/N} \bigg) \bigg\} .
    \end{align}
Altogether, the state of the $m$th UAV at time $n$ is
\end{itemize}
    \begin{align}\label{eq:stateTot}
        s_m^{(n)} = \left \{ \frac{x_{m}^{(n)}}{\rm S}, \frac{y_m^{(n)}}{\rm S}, \frac{h_m^{(n)}}{h_{\rm max}} , \frac{\boldsymbol{v}_m^{(n)}}{v_{\rm max}}, \frac{{\rm x}_{\rm c}^{(n)}}{\rm S}, \frac{{\rm y}_{\rm c}^{(n)}}{\rm S} , \frac{{\rm d}_{m,j}^{(n)}}{C_{\rm d}},i_{m}^{(n)} \right\},
    \end{align}
\begin{itemize}
    \item[] where ${\rm S}$ and $ C_{\rm d}$ are positive normalization constants, $\boldsymbol{v}_m^{(n)}$ is the UAV's speed and $ {\rm d}_{m,j}^{(n)}$ is the distance between the $m$th and the $j$th UAVs. 
   
    \item The action space defines the set of actions that the UAVs can take and  consists of variations in the 3D location and transmit power, such that $a_m^{(n)} \in \mathbb{R}^{4 \times 1}$. {Given the relationships established by \eqref{eq:linloc} and \eqref{eq:linvel}, the problem is solved with respect to the acceleration. The first three actions relate to the acceleration variations in 3D space, each constrained by a maximum and minimum value $[-A_{\rm max},A_{\rm max} ]$}. The last action component sets the transmit power, adjusted to any value satisfying $0 \leq p_m^{(n)} \leq p_{\rm max}$.
    
    \item The reward function
    evaluates the quality of the actions and motivates the agent to take those leading to desirable outcomes. Not only does it capture the cost function, but it satisfies the constraints. As per the reward shaping technique, the reward observed by the $m$th UAV is \cite{rew_shap}
    \begin{align}\label{eq:reward}
        r_m \! \left(s_m^{(n)},a_m^{(n)}\right) = \sum \limits_{w=1}^5 r_{m,w} \! \left(s_m^{(n)},a_m^{(n)} \right)  ,
    \end{align}
    whose terms are detailed next. In particular, $r_{m,1} \! \big( s_m^{(n)},a_m^{(n)} \big)$ contains the variation in the cost function. After performing action $a_m^{(n)}$, the state transitions to $s_m^{(n+1)}$, whose cost is defined as
    \begin{align}
        c(s_m^{(n+1)}) & = \int_{\mathcal{F}}  \tilde{f}\Big( \qm^{(n+1)}: m \in \mathcal{S}_1^{(n+1)}, \boldsymbol{v} \Big)
     \nonumber \\ 
        & \quad \cdot  
        \psi(\boldsymbol{v},n+1) d\boldsymbol{v}.
    \end{align}
    Thus, to motivate the UAVs to reduce the cost function, $r_{m,1} \! \left(s_m^{(n)},a_m^{(n)}\right)$ is defined as
    \begin{align}\label{eq:r1_tr}
        r_{m,1} \! \left(s_m^{(n)},a_m^{(n)}\right) = K_{\rm c}\left(1 - \frac{c(s_m^{(n+1)})}{I_{\rm r}^{(n)}}  \right) 
    \end{align}
   {where $K_{\rm c}>0$ and $I_{\rm r}^{(n)} = \frac{1}{\Delta}  \int_{\mathcal{F}} \psi(\boldsymbol{v},n+1) d\boldsymbol{v}$.}     Moreover, to avoid collisions among UAVs, we define 
    \begin{align}\label{eq:rewColl}
        r_{m,2}\left(s_m^{(n)},a_m^{(n)}\right) = \left\{\begin{array}{ll} 
                -K_{\rm coll} & \text{if $\exists$ } {\rm d}_{m,j}^{(n)}  \leq D_{\rm safe}\\
                0 & \text{otherwise} \\
                \end{array}, \right.
    \end{align}
    where $K_{\rm coll} > 0$. Similarly, we define the reward associated with \eqref{ct:Rate} as
    \begin{align}\label{eq:r3}
        r_{m,3} \! \left(s_m^{(n)},a_m^{(n)}\right) = \left\{\begin{array}{ll} 
                -K_{\rm f} & \text{if } \frac{B}{N} \geq \delta W \mathrm{\overline{SE}}_m^{(n)} \\
                0 & \text{otherwise} \\
                \end{array}, \right.
    \end{align}
    where $K_{\rm f}>0$.
    A similar expression is used to define $r_{m,4}(\cdot)$ and $r_{m,5}(\cdot)$, which penalizes actions taking the UAV out of limits and exceeding a maximum velocity, respectively, with respective penalties of $-K_{\rm h}$ and $-K_{\rm v}$. 
\end{itemize}

The {multi-agent} TD3 algorithm used to train a single UAV during the {tracking} stage  is summarized in Alg. \ref{alg:DRL_Monitor} whereas Fig. \ref{fig:DNN_1} presents its block diagram. {Note that the factorization of the larger problem into $M$ single-agent problems  allows for parallelization, leading to faster convergence and more efficient learning. Given that the UAVs in tracking mode have the same objective, their training will converge to the same models. Therefore, training can be conducted on one UAV, while the others can either follow the learned policy up to that point in time, perform random movements, or remain static.} Besides the initial network parameters, Alg.~\ref{alg:DRL_Monitor} accepts the number of episodes $\mathrm{E}$, which refers to the number of wildfire realizations the system will observe, as input.
Additionally, a replay buffer of size $|\mathcal{M}|$ stores transitions of the type $\left\{ s_m^{(n)}, a_m^{(n)}, r_m \! \left(s_m^{(n)},a_m^{(n)}\right), s_m^{(n+1)} \right\}$. Finally, $F$ refers to the update frequency over the policy and target networks.

\subsection{TD3: Charging Stage}

Since the {tracking} and {charging} stages are optimized separately, two independent models can be trained. The same ideas and algorithms described for tracking can be applied for charging, with only some modifications needed
to address \eqref{opt:ProblemCharge}. Concretely, during the {charging} stage,
only one UAV needs to be coordinated. Hence, the states consist of
\begin{align}\label{eq:statech}
    s_m^{(n)} = \left \{ \frac{{\rm x}_{u_m} - x_{m}^{(n)}}{\rm S}, \frac{{\rm y}_{u_m} -y_m^{(n)}}{\rm S}, \frac{{\rm h}_{u_m} -h_m^{(n)}}{h_{\rm max}} ,  {\rm d}_{m,j}^{(n)}, \frac{E_m^{(n)}}{E_0} \right\},
\end{align}
where, recall, the distance between the $m$th and $j$th UAVs is $ {\rm d}_{m,j}^{(n)}$ while $E_0$ is a positive constant normalizing the remaining energy. 
While $r_{m,2} (\cdot)$, $r_{m,4} (\cdot)$ and $r_{m,5} (\cdot)$ remain the same, $r_{m,3} 
\! \left(s_m^{(n)},a_m^{(n)}\right) = 0$ as the UAVs have 
no data to forward. The  main difference, though, is in the definition of $r_{m,1} \! \left(s^{(n)},a^{(n)}\right)$, in \eqref{eq:RewardCh} atop the next page, which includes the variation in the cost function:
\begin{itemize}
    \item When the charging point is reached, a positive reward $K_{\rm fin}$ is given.
    \item The term $- K_{\rm e} P_m^{(n)}\delta$ aims at minimizing the energy whereas the second component motivates the UAV to minimize the distance with the charging point. If it is approached, the UAV receives a positive reward; otherwise, the reward is negative. Finally, a negative reward of $K_{\rm en}$ is incurred if the remaining energy is exhausted.
\end{itemize}

\begin{figure*}
    \begin{align}\label{eq:RewardCh}
    r_{m,1}(s_m^{(n)},a_m^{(n)}) = \begin{cases}
       K_{\rm fin}  & \text{if $\boldsymbol{q}_m^{(n)} = \boldsymbol{c}_{c_m}$} \\  
      - K_{\rm e} P_m^{(n)}\delta +   K_{\rm d} \big( \| \boldsymbol{c}_{c_m}-\boldsymbol{q}_m^{(n)}\| - \| \boldsymbol{c}_{c_m}-\boldsymbol{q}_m^{(n+1)}\| \big)  & \text{otherwise}
 \end{cases}.
\end{align}
\end{figure*}

Alg. \ref{alg:DRL_Monitor} applies to the charging stage
with the above modifications in the rewards. In addition, whereas the {tracking} stage does not have a final state, the training of the {charging} model does reach a final state whenever $\boldsymbol{q}_m^{(n)} = \boldsymbol{c}_{c_m}$. Thus, the \textit{while} condition in the loop can be changed from $n < N_e$ to $\boldsymbol{q}_m^{(n)} \neq \boldsymbol{c}_{c_m}$.

\begin{figure*}[!htb]
     \centering
     \includegraphics[scale=0.7]{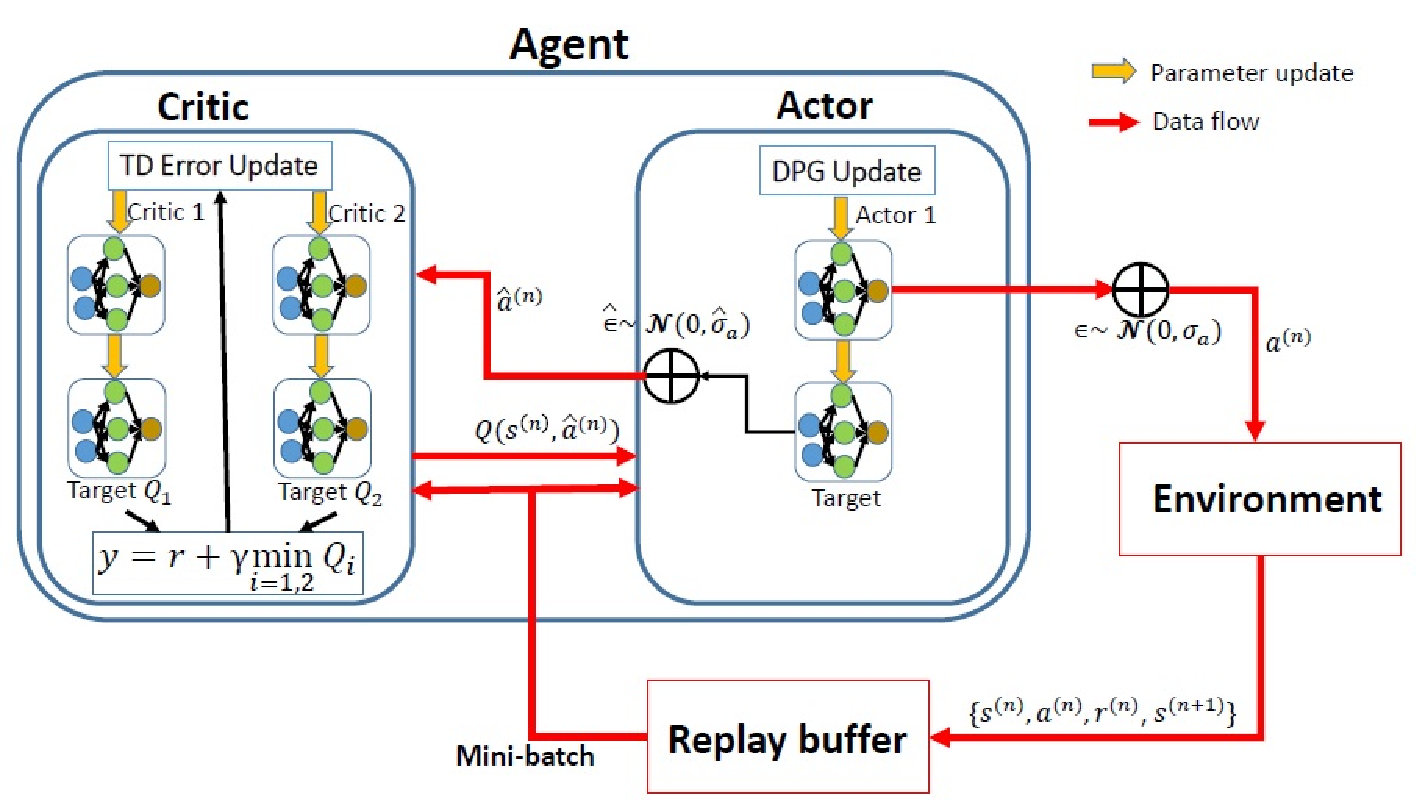}
     \caption{TD3 block diagram.}
     \label{fig:DNN_1}
\end{figure*}

\begin{algorithm}
\caption{TD3 - Tracking stage}\label{alg:DRL_Monitor}
\KwIn{No. of episodes  $\mathrm{E}$, initial policy parameters ${\boldsymbol{\phi}}$,
critic  parameters $\boldsymbol{\theta}_1$,  $\boldsymbol{\theta}_2$,  empty memory replay buffer $\mathcal{M}$ and parameter update frequency $F$. }
Initialize $\boldsymbol{\theta}_1^{'} \leftarrow \boldsymbol{\theta}_1$, $\boldsymbol{\theta}_2^{'} \leftarrow \boldsymbol{\theta}_2$, and $\boldsymbol{\phi}^{'} \leftarrow \boldsymbol{\phi}$,

\For {$e = 1,\dots,\mathrm{E}$}{
    Set $n=0$ and initialize UAV, AP and ignition point locations. \\
    \While{ $n < N_{e}$}{
        Obtain the current state $s^{(n)}$. \\
        Perform action $a_m^{(n)} = \text{clip}\left( \pi_{\boldsymbol{\phi}}\big(s_m^{(n)}\big) + \epsilon \right)$, with $\epsilon \sim \mathcal{N}(0,\sigma_a)$, and observe $s_m^{(n+1)}$.\\
        Compute the reward $r \! \left(s_m^{(n)},a_m^{(n)}\right)$ as in \eqref{eq:reward}. \\
         Store $\left\{ s_m^{(n)}, a_m^{(n)}, r_m \! \left(s_m^{(n)},a_m^{(n)}\right), s_m^{(n+1)} \right\} $ in  $\mathcal{M}$.\\
         Select a minibatch of $N_{\rm mem}$ experiences  $\left\{ s_m^{(n)}, a_m^{(n)}, r_m \! \left(s_m^{(n)},a_m^{(n)}\right), s_m^{(n+1)} \right\} $ from $\mathcal{M}$.\\
         Compute target actions $\hat{a}_m^{(n)}$ as in \eqref{eq:a_noisy}. \\
         Compute targets $y(s,a)$ as in \eqref{eq:out_DNN}. \\
         Update the critics by minimizing $\mathbb{E}\{ | y(s,a) - Q(s,a;\boldsymbol{\theta}_i) |^2  \}$ for $i=1,2$. \\
        \If{$n$ mod $F$}{
            Update $\boldsymbol{\phi}$ via \eqref{eq:PG_Grad}.\\
            Update  target networks $\boldsymbol{\theta}_1^{'}$, $\boldsymbol{\theta}_2^{'}$, and $\boldsymbol{\phi}^{'}$ as in \eqref{eq:update1} and \eqref{eq:update2}.    }
    $n = n + 1$. \\
    {Optional: move the  rest of UAVs. \\}
    }
}
\end{algorithm}

\section{Numerical Results}\label{Sec:SimRes_Fire}

For the purpose of performance evaluation, a $300$-m$^2$ environment is considered with the simulation parameters in Table~\ref{tab:SimParams_Fire}. {The camera parameters are set based on \cite{5959179,9802837}, emulating real devices,} while the UAV and cell-free parameters are borrowed from \cite{8097026,10186347,LOSLTE}. {Aiming at simulating real wildfires, the fire propagation components are chosen from  \cite{Finney_1998,9147613,7587184}}  whereas the TD3 parameters are configured as suggested in \cite{9426899,9508149,9504602,10086561}. Particularly, the variance of the noise added to the actions is ${\sigma}_a = \hat{\sigma}_a = 0.1$, $\hat{\epsilon}_{\rm max} = 0.5$, and the parameter soft updates are handled with $\tau_{\rm T} = 0.01$. Also, the learning rate for actor and critics is set to $5\cdot10^{-4}$ and $5\cdot10^{-3}$, respectively. The normalization constants in the states and rewards are obtained via cross-validation and summarized in Table \ref{Tab:NormCt}; other values could work as well. The initial UAV {and AP} locations, charging point coordinates, and fire ignition point are randomly generated. {At each time slot, the fire perimeter is updated according to Sec. \ref{sec:FireDyn} whereas the UAV locations are updated following the output of the actor block in the trained TD3 algorithm.} Finally, given that  we solve two separate problems, we first provide training and evaluation results assuming unlimited energy at UAVs. Then, after discussing the charging performance, the combined results of tracking and charging are presented.

\begin{table*}[h]
  \caption{Simulation parameters}\label{tab:SimParams_Fire}
  \centering
  \begin{tabular}{|l|l|l|l|l|l|}
    \hline
    Description & Parameter & Value & Description & Parameter & Value \\
    \hline
    Camera half-view angles & $\alpha_1$, $\alpha_2$ & 17.5$^{\circ}$, 13.125$^{\circ}$ & Maximum  power & $p_{\rm max}$   & 100 mW  \\
    Camera parameters & $a$, $b$ & $10^{-6}$, 10 &  Pathloss at $1$ m & $\beta_0$ & -30 dB \\
    Compression factor & $\rho$ & 0.4 & Noise power & $\sigma^2$ & -96 dBm \\
    No. of APs & $L$ & $10$  & Dense urban param. & $A_1$, $A_2$ &  0, 6.4 dB \\
    No. of charging points & $C$ & $4$  & Regularizer in \eqref{eq:mult_camera} & $\Delta$ &  $ 10^{-5}$ \\
    Carrier frequency & $f_{\rm c}$ & 2.4 GHz & Time slot &  $\delta$ & 0.5 s \\
     Maximum acceleration &  $A_{\rm max}$ & 1 & Transmission bandwidth &  $B$ & 10 MHz \\
     Maximum speed &  $V_{\rm max}$ & 20 & Learning rate &  $\gamma$ & 0.85 \\
     Minimum and maximum altitude &  $H_{\rm min}$, $H_{\rm max}$ & 100 m, 150 m & Image forwarding period &  $N$ & 2 \\
     Pilot sequence length &  $\tau$ & 200 &  No. of channel uses &  $\tau_{\rm c}$ & 6250  \\ 
     Mean midflame wind speed &  $U_{0}$ & 5 m/s & Minimum safety dist. &  $D_{\rm safe}$ & 4 m \\
     Midflame wind speed variance &  $\sigma_{U}$ & 1 m/s & Mean wind direction &  $\hat{\theta}_{\rm wind}$ & $\mathcal{U}[0,2\pi]$ \\
     Fire spreading rate &  $\mathsf{R}$ & 35 m/min & Wind direction variance &  $\sigma_{\rm wind}$ & 0.1  \\
    \hline
  \end{tabular}
\end{table*}

\begin{table}
\centering
\caption{State and reward parameters}
\begin{tabular} 
{ |l||l|l| }
 \hline
{\bf Description} & {\bf Parameter} & {\bf Value} \\
 \hline
    Normalization in \eqref{eq:stateTot}, \eqref{eq:statech} & ${\rm S}$ & 300  \\
    Constant in \eqref{eq:r1_tr} & $K_{\rm c}$ & 50 \\
    Penalty in \eqref{eq:rewColl} & $K_{\rm coll}$ & 100 \\
    Penalty in \eqref{eq:r3} & $K_{\rm f}$ &  15 \\
    Flying-out-of-limits penalty  & $K_{\rm h}$ & 60 \\
    Normalization in \eqref{eq:statech} & $E_0$ & 12000 \\
    Rewards and constants in \eqref{eq:RewardCh} & $K_{\rm fin}$,$K_{\rm e}$, $K_{\rm d}$ & 200, 1, 0.1  \\
    Out-of-energy penalty & $K_{\rm en}$ & 100  \\
 \hline
\end{tabular}
\label{Tab:NormCt} 
\end{table}

For starters, let us 
validate the cell-free asymptotic derivations, as they are leveraged in \eqref{opt:ProblemTrack}. 
To that end, Fig.~\ref{fig:SE_Surf_Fire} plots the average user spectral efficiency for different $M$ and $L$; as one would expect, a smaller $M/L$, i.e., more APs per UAV, yields a better spectral efficiency.
In turn, Fig.~\ref{fig:SE_A_Fire} indicates that Thm.~\ref{prop:MMSE_Fire} provides an excellent approximation even in low-dimensional systems, say $M = 3$ and $L = 6$. 
\begin{figure*}[]
     \centering
     \begin{subfigure}[b]{0.495\textwidth}
         \centering
         \includegraphics[scale=0.6]{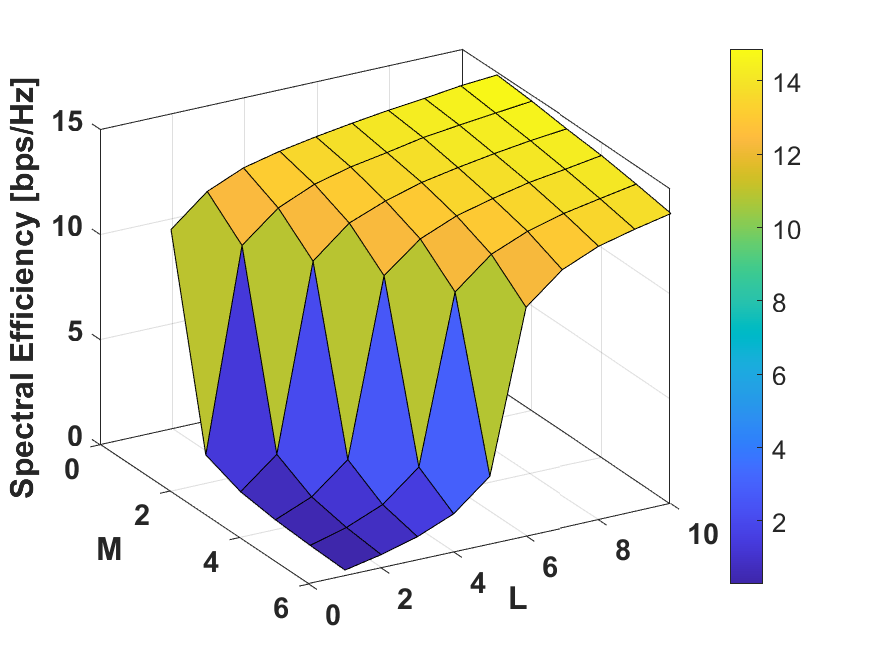}
         \caption{}
         \label{fig:SE_Surf_Fire}
     \end{subfigure}
     \hfill
     \begin{subfigure}[b]{0.495\textwidth}
         \centering
         \includegraphics[scale=0.6]{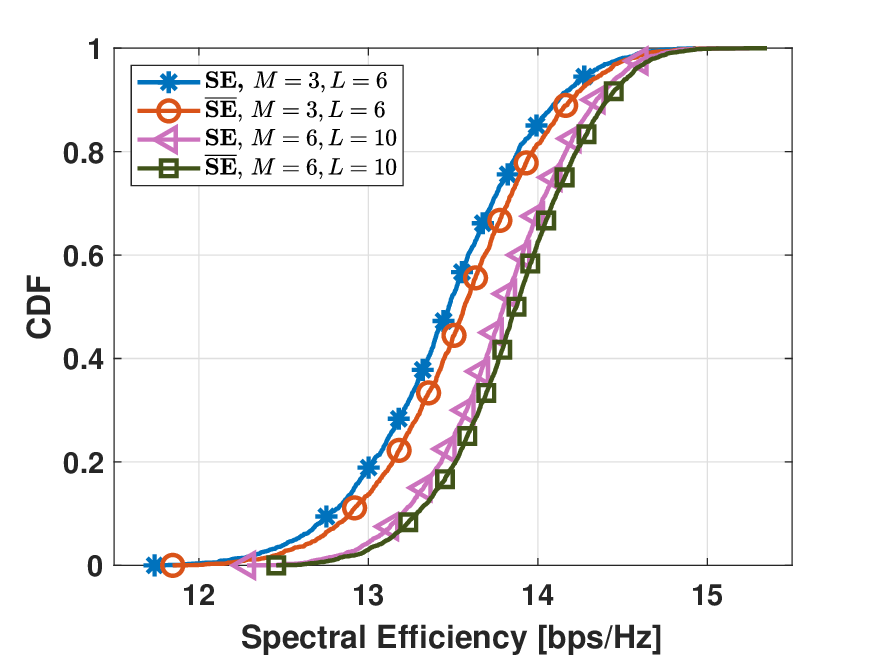}
         \caption{}
         \label{fig:SE_A_Fire}
     \end{subfigure}
     \caption{(a) Spectral efficiency for different $M,L$; (b) validation of Theorem \ref{prop:MMSE_Fire}.
         }\label{Fig:RMT_Fire}
\end{figure*}

Next, the attention shifts to the tracking stage. Fig.~\ref{Fig:Track_Reward} assesses the TD3 algorithm's training performance for $M=1,\dots,4$, where a 3-layer feed-forward neural network is used, each layer containing 256 neurons. In the initial 750 episodes, the replay buffer accumulates 
transitions to ensure that, upon training commencement, every minibatch comprises samples with minimal correlation, enhancing the learning process. Hence, learning truly begins after episode 750, leading to an overall increase in the average reward for all models. Once 3000 episodes are reached, rewards stabilize. For subsequent evaluations, the models saved at the 6000-episode mark are utilized.
Fig.~\ref{fig:Track_M_All} evaluates the trained models over a single realization of a wildfire with
the cost function in \eqref{opt:ProblemTrack}.
Despite the random initialization of the UAV locations, the swarm efficiently repositions itself, ensuring full perimeter coverage by $n=100$. However, as $n$ reaches 300, the perimeter dimensions become too extensive to be managed by $M=1$ UAV, resulting in a significant cost increase. A similar trend is observed for $M=2$ UAVs, but $M=3$ and $M=4$ UAVs manage to maintain a satisfactory performance. Note how, due to the regularization parameter $\Delta$,   the cost function exhibits a high variance. When all points are covered, the function yields a small value. However, if even a single perimeter point remains uncovered, the function experiences sharp spikes.
The coverage metric defined in \eqref{eq:cov} then comes handy, as a smoother complement to the cost function, and hence both values are reported in the sequel.


\begin{figure*}[]
     \centering
     \begin{subfigure}[b]{0.495\textwidth}
         \centering
         \includegraphics[scale=0.6]{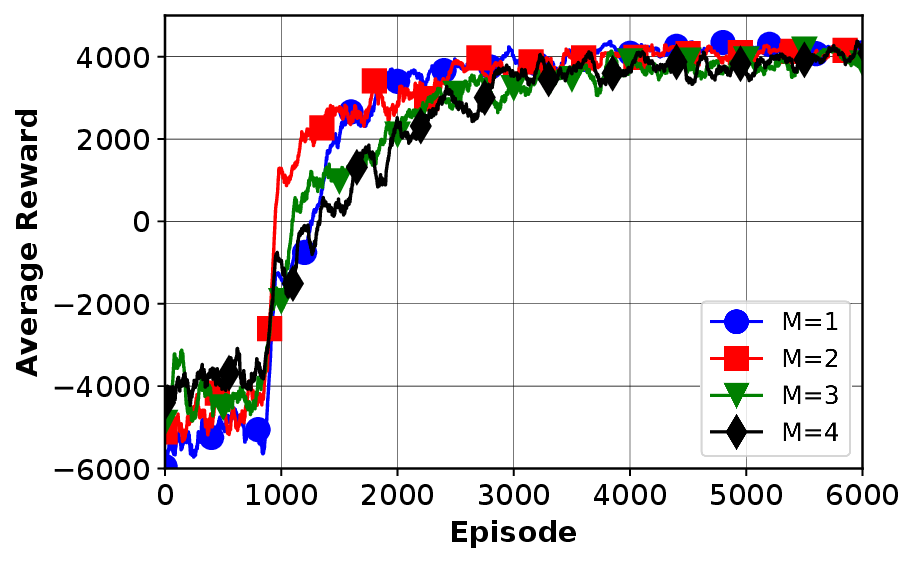}
         \caption{}
         \label{fig:Track_M4}
     \end{subfigure}
     \hfill
     \begin{subfigure}[b]{0.495\textwidth}
         \centering
         \includegraphics[scale=0.6]{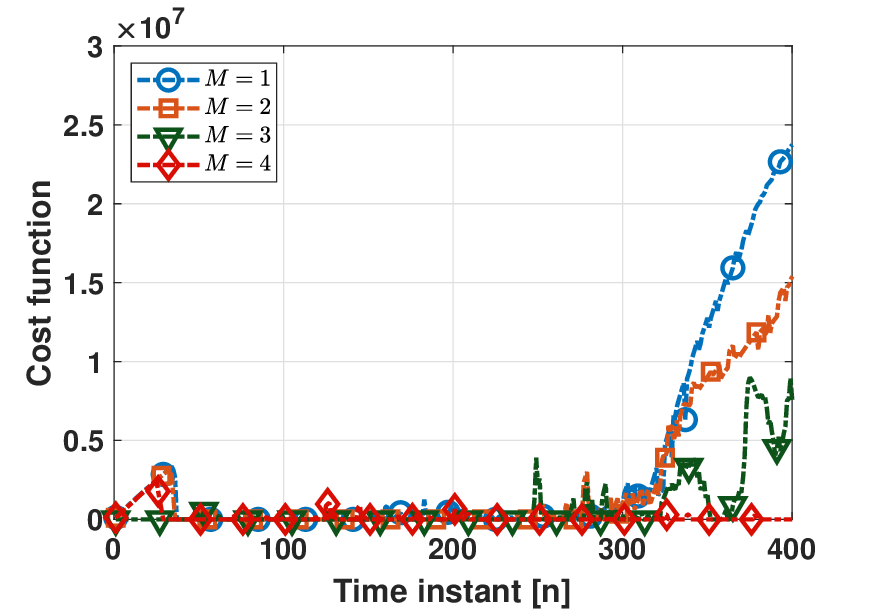}
         \caption{}
         \label{fig:Track_M_All}
     \end{subfigure}
     \caption{For the {tracking} problem (a) average training reward; (b) cost function value as a function of $n$.
         }\label{Fig:Track_Reward}
\end{figure*}

Next, we measure the cumulative distribution function (CDF) of the cost function and fire coverage for $H_{\rm min}=125$ and $H_{\rm max}=150$ at $n=350$ over 1,000 independent wildfire events. 
As one would expect, increasing the number of UAVs results in an increased coverage and a smaller cost function. Particularly, the case with  $M=4$ UAVs achieves a 90\% coverage with very high probability.
Also, note that both figures consider $n=350$, i.e., fires that have already expanded and are large. Although not shown for the sake of conciseness, for $n=100$, meaning for fires that are still small, 90\% coverage is achieved regardless of the number of UAVs.

\begin{figure*}[]
     \centering
     \begin{subfigure}[b]{0.495\textwidth}
         \centering
         \includegraphics[scale=0.6]{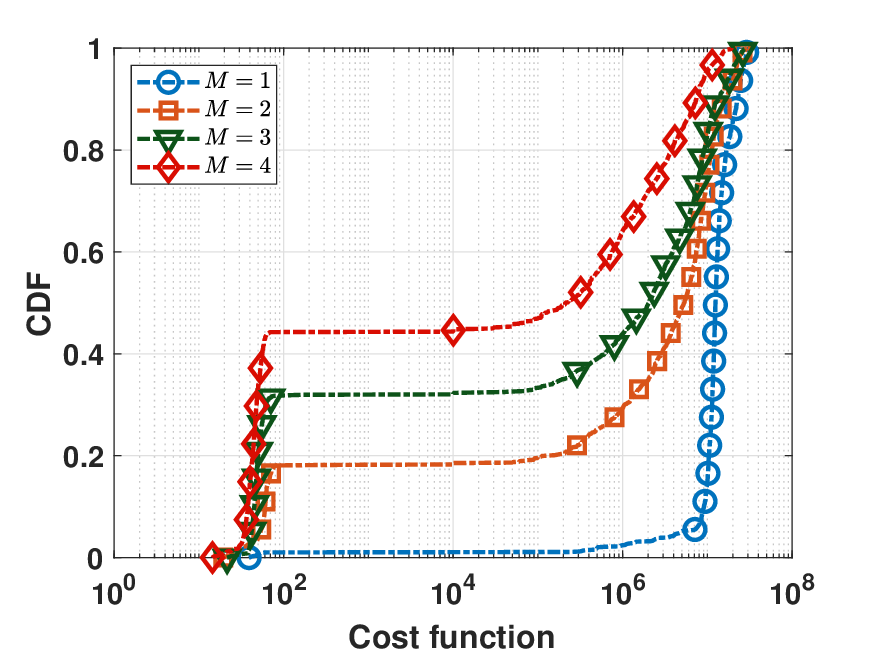}
         \caption{}
         \label{fig:Tr_S1}
     \end{subfigure}
     \hfill
     \begin{subfigure}[b]{0.495\textwidth}
         \centering
         \includegraphics[scale=0.6]{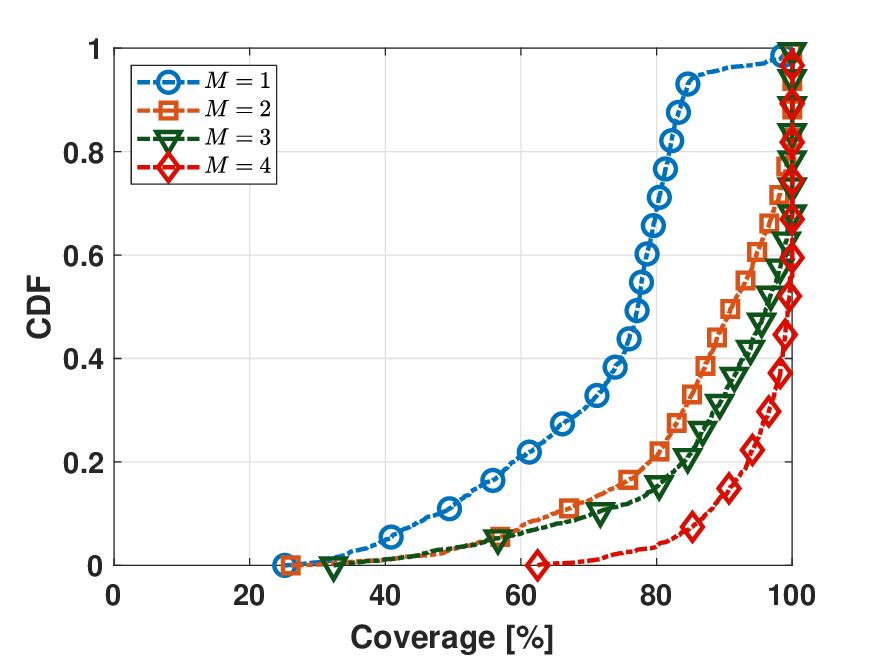}
         \caption{}
         \label{fig:Tr_S2}
     \end{subfigure}
     \caption{At $n=350$ for  $H_{\rm min}=125$ and $H_{\rm max}=150$ (a) cost function; (b) coverage.
         }\label{Fig:Tr_SA}
\end{figure*}

To quantify the impact of the flying altitude on the problem, we include Fig. \ref{Fig:Tr_SB}. These figures measure the swarm's area per pixel (blue) and coverage (red) over 400 time steps and average the results across 1,000 different fire realizations, considering only the tracking stage where UAVs have unlimited energy. Fig. \ref{fig:Tr_S3} corresponds to $H_{\rm min}=125$ and $H_{\rm max}=150$, while Fig. \ref{fig:Tr_S4} is for $H_{\rm min}=100$ and $H_{\rm max}=125$. As expected, reducing the allowed flying altitudes decreases the FoV, leading to a degradation in both the cost function and coverage. This degradation is especially notable when the fire is large, say $n>300$, whereas, at the onset of the event, lower flying altitudes suffice.

\begin{figure*}[]
     \centering
     \begin{subfigure}[b]{0.495\textwidth}
         \centering
         \includegraphics[scale=0.6]{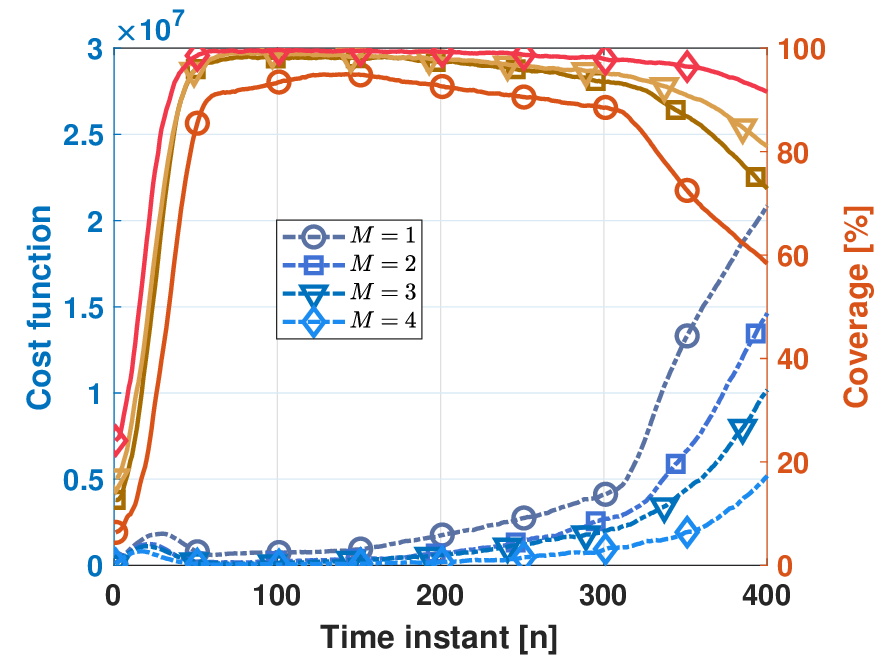}
         \caption{}
         \label{fig:Tr_S3}
     \end{subfigure}
     \hfill
     \begin{subfigure}[b]{0.495\textwidth}
         \centering
         \includegraphics[scale=0.6]{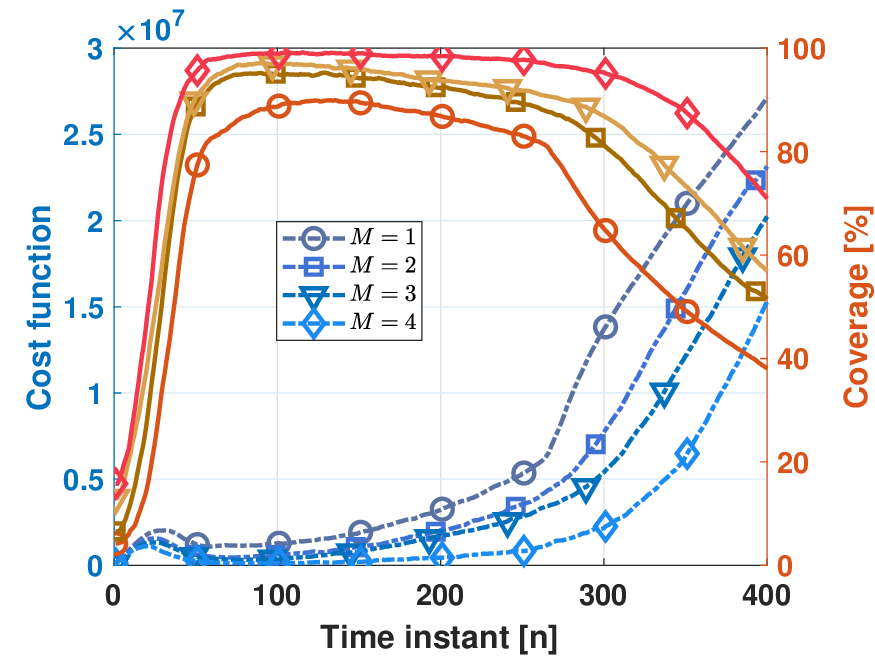}
         \caption{}
         \label{fig:Tr_S4}
     \end{subfigure}
     \caption{Cost and  coverage  for tracking  and $(H_{\rm min},H_{\rm max})$: (a) (125,150) (b) (100,125).
         }\label{Fig:Tr_SB}
\end{figure*}

{
Next, our RL solution is tested against two benchmarks. The first one distributes the UAVs uniformly at random throughout the region and is denoted by U. The second one, denoted by G, places the UAVs around the ignition point following a 2D Gaussian random distribution with covariance $10 \boldsymbol{I}$. Against these benchmarks, two initializations are considered for the RL approach, again uniformly at random or Gaussian, denoted respectively by U+RL and G+RL. Fig.~8 presents results for $M=3$ and $M=4$ with  $H_{\rm min}=125$ and $H_{\rm max}=150$. The following is observed:
    \begin{itemize}
        \item     The U method performs poorly. An approximate value for the coverage achieved by this method can be gauged from the expected coverage with $M$ UAVs, computed as the ratio between the area covered by the UAV's FoV and the total area of the region, namely
        \begin{equation}
        M \frac{ 2^2 \, \mathbb{E} \{ h_m^{(n)} \} ^2 \tan \alpha_1  \tan \alpha_2 }{S^2}.
        \end{equation}
        For  $\mathbb{E} \{h_m^{(n)} \}= (H_{\rm max}+H_{\rm min}) / 2$, the values obtained for $M=3$ and $M=4$ are, respectively, 0.199 and 0.265, consistent with the coverage curves for the U plots. 
        \item The G benchmark performs well during the initial stages. However, as the fire grows large, say $n>150$, both the area per pixel and coverage degrade significantly. In contrast, our method manages to provide solid coverage under both initializations. Starting with the UAVs around the ignition point, corresponding to G+RL, ensures better coverage at first, but the U+RL method rapidly relocates the UAVs to cover the fire effectively too.
    \end{itemize}
}

\begin{figure*}[]
     \centering
     \begin{subfigure}[b]{0.495\textwidth}
         \centering
         \includegraphics[scale=0.6]{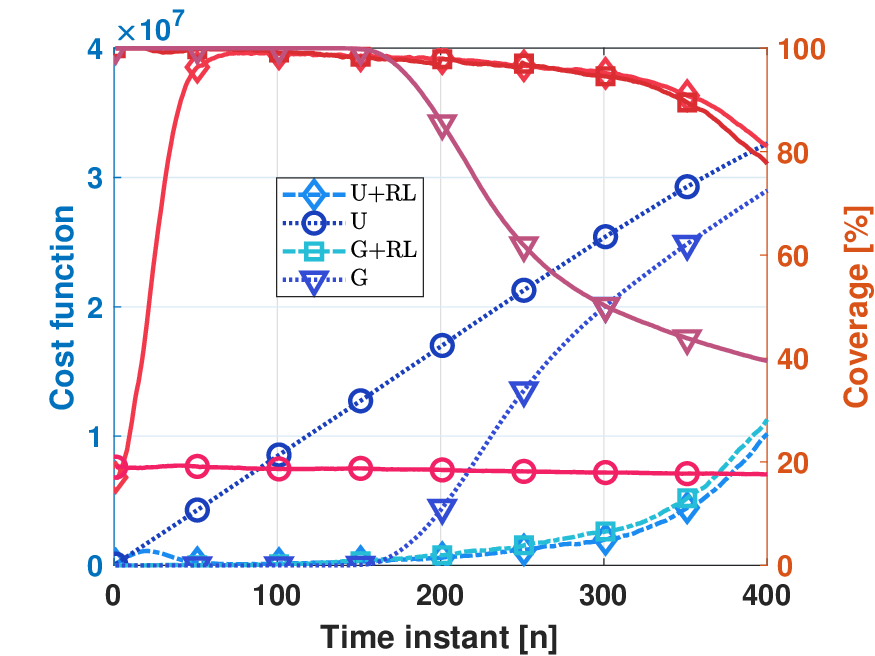}
         \caption{}
         \label{fig:B1}
     \end{subfigure}
     \hfill
     \begin{subfigure}[b]{0.495\textwidth}
         \centering
         \includegraphics[scale=0.6]{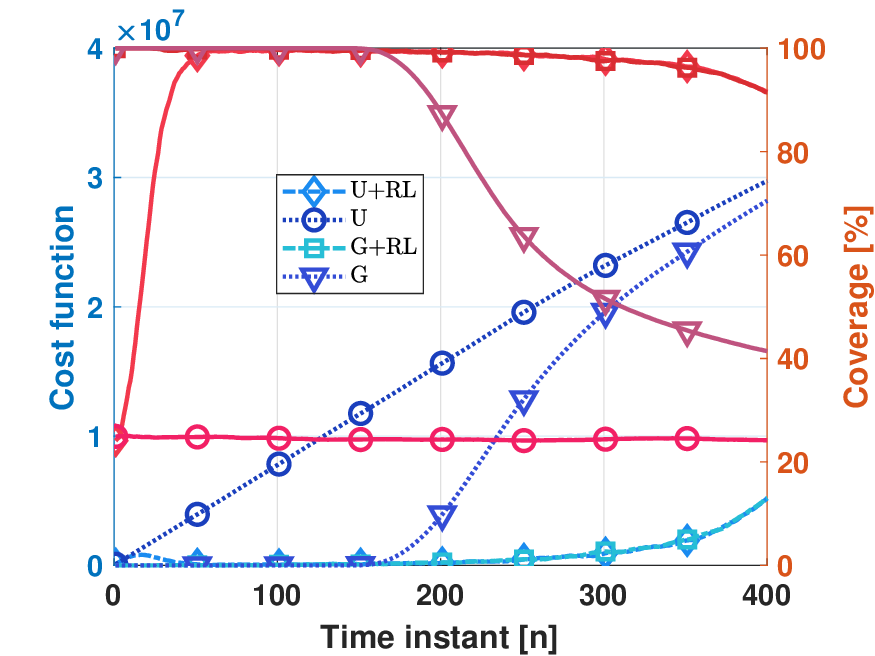}
         \caption{}
         \label{fig:B2}
     \end{subfigure}
     \caption{Cost and  coverage  for tracking under different solutions: (a) $M=3$ (b) $M=4$.
         }\label{Fig:Bench}
\end{figure*}

Before merging the tracking and charging stages, we validate the TD3 solution applied to the charging stage via \eqref{opt:ProblemCharge}, where the same three-layer feedforward neural network is used. For training, episodes are limited to a maximum length of 200. The maximum number of UAVs coincides with that of the tracking stage, i.e., $M=4$. UAV locations are randomly initialized. During  each episode, one UAV intends to reach a charging point, while the remaining UAVs perform random movements. Fig. \ref{fig:Tr_Ch} plots the average training reward versus the episode number for various values of $M$ with  $\gamma=0.85$. After 3,000 episodes, the agent is fully trained, achieving an average reward of around 200---the value of $r_{\rm fin}$---thus resulting in $\boldsymbol{q}_m^{(n)} = \boldsymbol{c}_{c_m}$. Once the agent is trained,  the energy threshold at which the UAV switches from tracking to charging needs to be set. 
For that purpose, a 2D histogram is produced relating the distance between the UAV and the charging point when switching to the charging stage (x-axis), and the amount of energy that the UAV needs to reach the charging point (y-axis). 
Fig. \ref{fig:Eval_Ch} shows such a histogram 
 after evaluating the trained agent over 50,000 episodes. 
Since the UAV should reach the charging point with very high probability, a conservative threshold should be chosen. For example, the UAV switches to charging when, at a certain distance, its level of energy falls below $1.2$ times the maximum energy at that distance in Fig. \ref{fig:Eval_Ch}.
\begin{figure*}[]
     \centering
     \begin{subfigure}[b]{0.495\textwidth}
         \centering
         \includegraphics[scale=0.55]{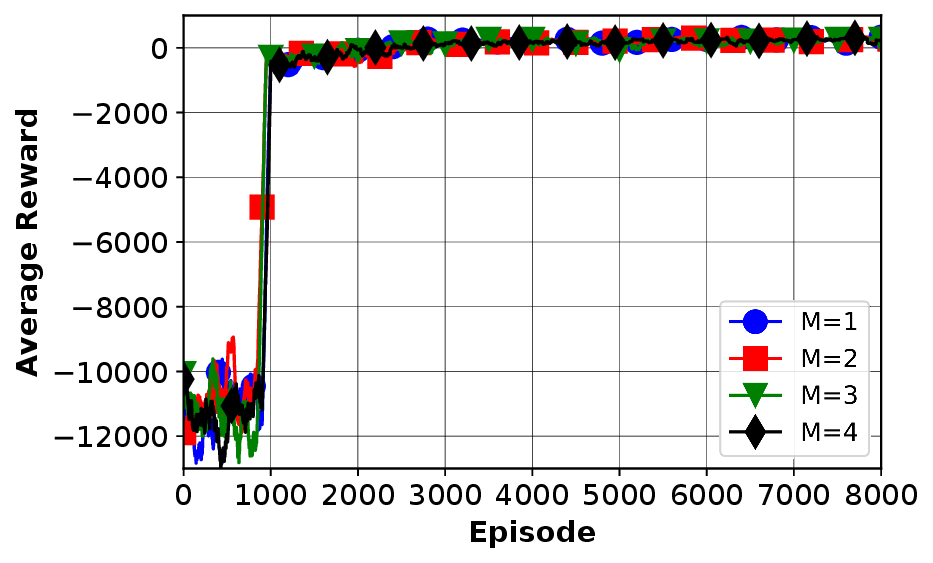}
         \caption{}
         \label{fig:Tr_Ch}
     \end{subfigure}
     \hfill
     \begin{subfigure}[b]{0.495\textwidth}
         \centering
         \includegraphics[scale=0.55]{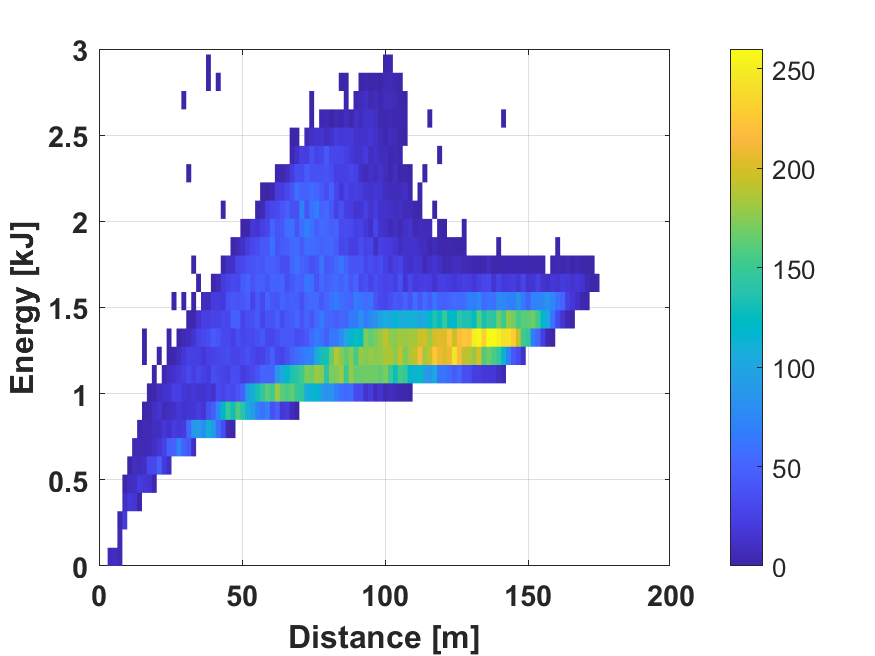}
         \caption{}
         \label{fig:Eval_Ch}
     \end{subfigure}
     \caption{For the {charging} problem (a) average training reward; (b) 2D histogram for the distance vs energy in a trained agent.
         }\label{Fig:Ch_Analysis}
\end{figure*}

Finally, for the combined tracking and charging results, the UAV energies are initialized at random 
from $\mathcal{U}[3.5, E_{\rm max}]$ kJ for different $E_{\rm max}$. This is justified by the UAVs' previous surveillance resulting in different energy values. The minimum energy level for UAVs is set to 3.5 kJ, ensuring that UAVs can reach a charging point in case their initial energy levels are low. We measure the swarm's coverage over 400 time steps and average the results across 1,000 different fire realizations. Fig. \ref{Fig:Cov_1_E} plots the cost function (blue) and coverage (red) given $E_{\rm max}=125$ kJ for various $M$ and $(H_{\rm min},H_{\rm max})$. 
While the initial coverage is small, UAVs rapidly adjust their locations to track the varying perimeter. Although, at each time instant, we aim at minimizing the cost function, the corresponding curves are increasing. This is due to the fire perimeter expanding with time, with a higher cost even if full coverage is achieved. Also, note that the values attained in Fig.~\ref{Fig:Cov_1_E} are similar to those in Fig.~\ref{Fig:Tr_SB}, where infinite energy is assumed, only with a 3-5\% degradation.

To further assess the effects of finite energy batteries, Fig.~\ref{Fig:Cov_2_E} presents results for lower values of $E_{\rm max}$, precisely for $E_{\rm max} = 125$ kJ, $E_{\rm max} = 100$ kJ and $E_{\rm max} = 80$ kJ, for a variety of $M$ with $H_{\rm min}=125$ and $H_{\rm max}=150$. Interestingly, since the UAVs switch to charging more often when their energy levels are lower, the performance worsens. There is a 10\% degradation in coverage for $M=3$  while, for $M=4$, that degradation is  smaller, especially when the fire is extensive. Conversely, when a fire is in its initial stage, coverages above $90$\% can be achieved at all energy levels.

\begin{figure*}[]
     \centering
     \begin{subfigure}[b]{0.495\textwidth}
         \centering
         \includegraphics[scale=0.6]{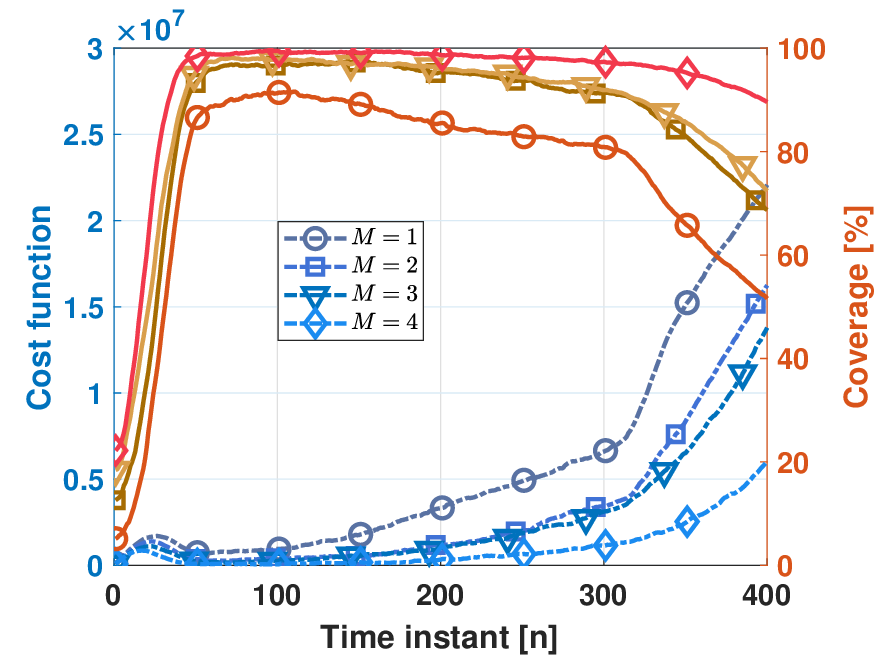}
         \caption{}
         \label{fig:Cov_1_E1}
     \end{subfigure}
     \centering
     \begin{subfigure}[b]{0.495\textwidth}
         \centering
         \includegraphics[scale=0.6]{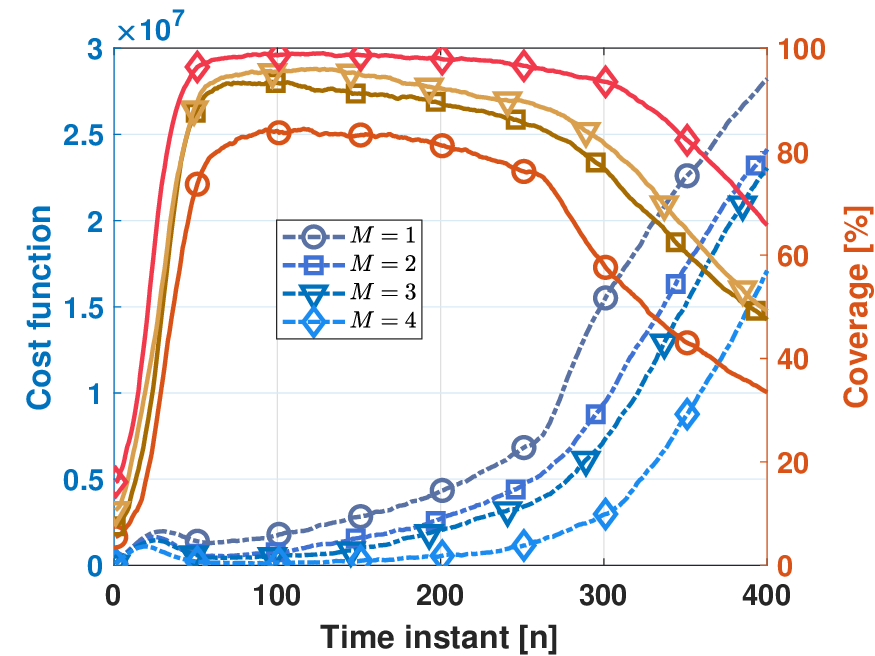}
         \caption{}
         \label{fig:Cov_1_E3}
     \end{subfigure}
     \hfill
     \caption{Cost and  coverage  for $E_{\rm max} = 125$ kJ and $(H_{\rm min},H_{\rm max})$: (a) (125,150) (b) (100,125).
         }\label{Fig:Cov_1_E}
\end{figure*}

\begin{figure*}[]
     \centering
     \begin{subfigure}[b]{0.49\textwidth}
         \centering
         \includegraphics[scale=0.6]{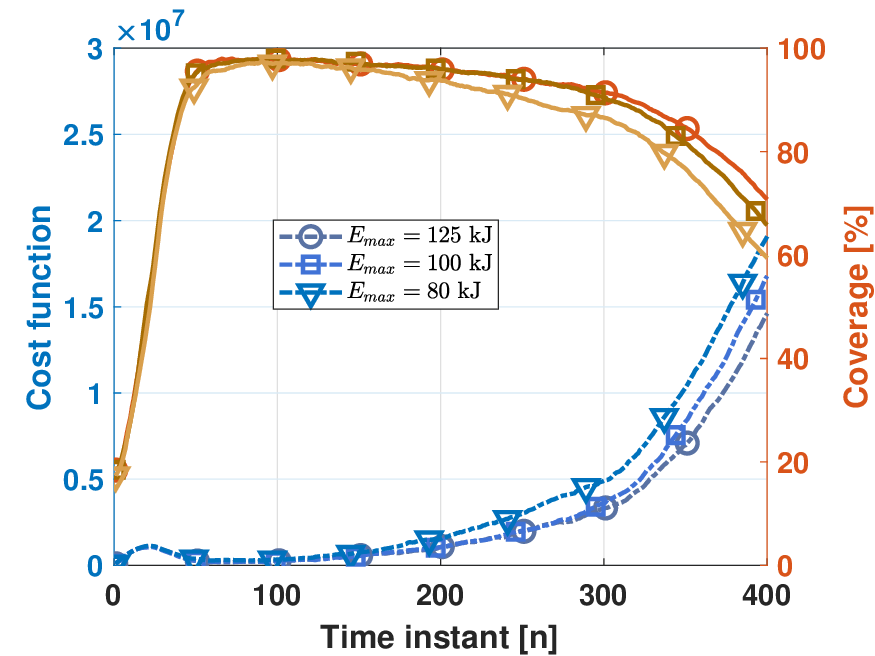}
         \caption{}
         \label{fig:Cov_2_E1}
     \end{subfigure}
     \centering
     \begin{subfigure}[b]{0.499\textwidth}
         \centering
         \includegraphics[scale=0.6]{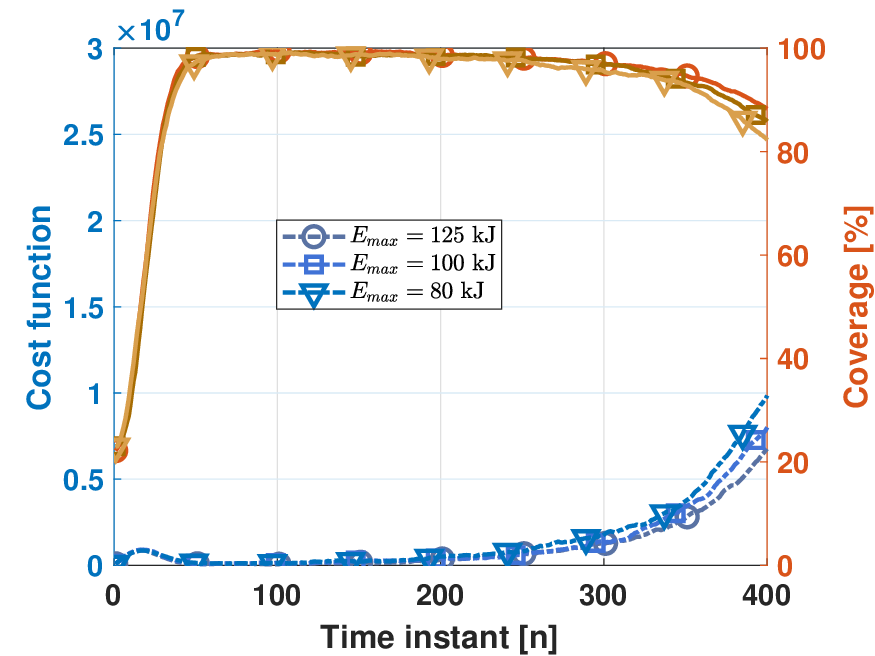}
         \caption{}
         \label{fig:Cov_2_E2}
     \end{subfigure}
     \hfill
     \caption{Cost and coverage for different $E_{\rm max}$ (a) $M=3$, (b) $M=4$. 
         }\label{Fig:Cov_2_E}
\end{figure*}

{
Fig. \ref{Fig:Traj} illustrates the system's operation for $M=3$ UAVs and a given wildfire realization. Precisely, Figs.~\ref{fig:Traj1} and \ref{fig:Traj2} depict the UAV trajectories until $n=150$ and $n=400$, respectively, alongside the wildfire perimeter represented by means of a 2D histogram, where the value of each point represents its importance. For that same realization, the cost function and coverage are depicted in Fig.~\ref{fig:Traj3} whereas Fig.~\ref{fig:Traj4} shows the excess bit rate percentage in the cell-free connectivity constraint given by \eqref{ct:Rate}. At the beginning of the mission, UAV-1 (red) needs to recharge its battery and therefore heads to the closest charging point with coordinates ($225,80,10$). In parallel, UAV-3 (blue) approaches the wildfire and, by $n=36$, full coverage is achieved; see Fig.~\ref{fig:Traj3}. However, the energy levels of UAV-3 are low, and at $n=74$ it switches to charging mode and heads to the charging point located at ($230,225,10$), causing a reduction in coverage. At that time, UAV-1 (red) and UAV-2 (green) have enough energy and are in tracking mode, repositioning themselves closer to the wildfire. At $n=84$, coverage begins to increase, reaching full coverage by $n=140$. Afterwards, the three UAVs continue adjusting their locations to cover the fire perimeter, maintaining the coverage well above $80$\% for the remainder of the mission. Finally, Fig.~\ref{fig:Traj4} presents the excess bit rate in \eqref{ct:Rate} as a percentage of $\frac{B}{N}$. (A value of 0 is assigned during the charging periods as UAVs do not transmit.). The combination of trajectory and power optimization yields a positive gap, indicating that the images are correctly relayed to the network.
}

\begin{figure*}[]
     \centering
     \begin{subfigure}[b]{0.495\textwidth}
         \centering
         \includegraphics[scale=0.63]{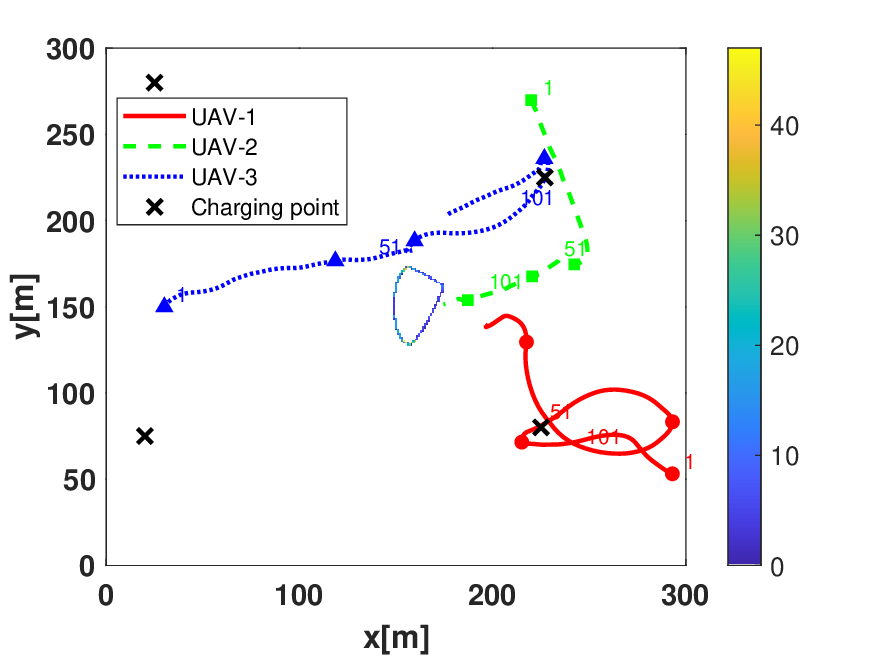}
         \caption{}
         \label{fig:Traj1}
     \end{subfigure}
     \centering
     \begin{subfigure}[b]{0.495\textwidth}
         \centering
         \includegraphics[scale=0.63]{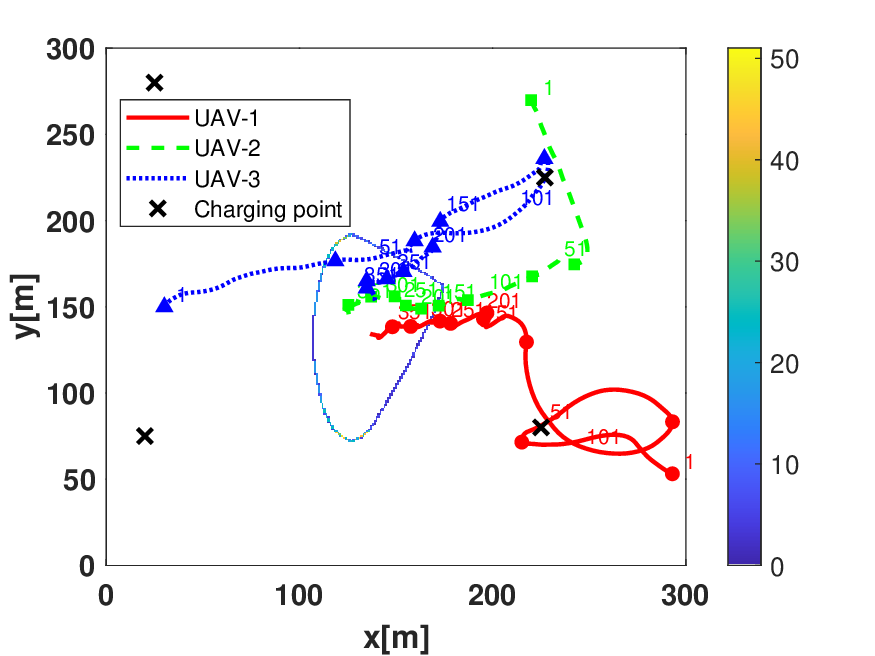}
         \caption{}
         \label{fig:Traj2}
     \end{subfigure}
     \hfill
    \vskip\baselineskip
    \centering
     \begin{subfigure}[b]{0.495\textwidth}
         \centering
         \includegraphics[scale=0.6]{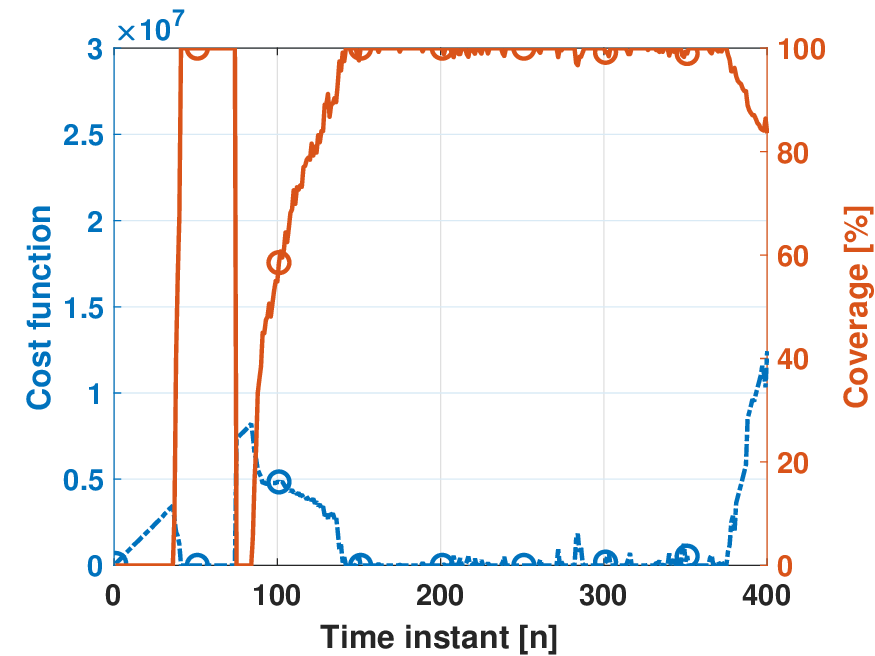}
         \caption{}
         \label{fig:Traj3}
     \end{subfigure}
     \centering
     \begin{subfigure}[b]{0.495\textwidth}
         \centering
         \includegraphics[scale=0.6]{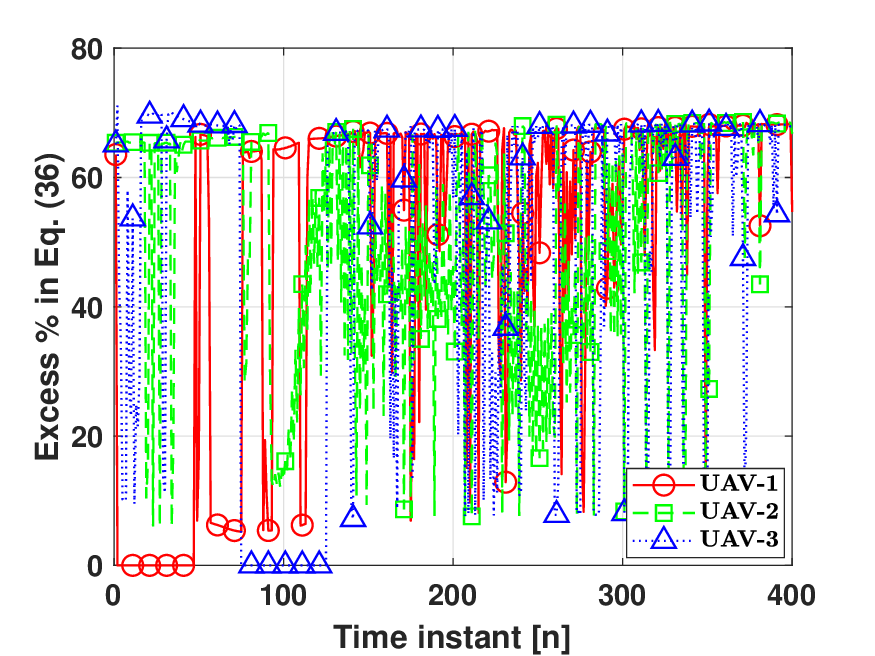}
         \caption{}
         \label{fig:Traj4}
     \end{subfigure}
     \caption{For a given wildfire realization: (a) wildfire and UAV trajectories until $n=150$; (b) wildfire and UAV trajectories until $n=400$; (c) cost function and coverage over the entire mission; (d) excess bit rate in \eqref{ct:Rate}.
         }\label{Fig:Traj}
\end{figure*}

\section{Conclusion}\label{Sec:Concl_Fire}

This paper has considered a cell-free UAV network whose aim is to track a wildfire while satisfying a set of mechanical, energetic, and communication constraints.
Two complex nonconvex optimization problems have been formulated, for tracking and charging, and a reinforcement learning framework has been applied to tackle them. Particularly, the TD3 algorithm has been used. Extensive results have shown that a small swarm of UAVs can reliably provide coverage. Concretely, if the energy levels   and the flying altitudes are moderately high, an average coverage of more than 90\% can be achieved with only a few UAVs, with that coverage shrinking with
the charging level and the altitude. Altogether, the tradeoff among the number of UAVs, energy, and flying altitude, has been established.

\appendices

\section{}
\begin{theorem}\label{th:RMT1}
(\cite[Thm. 1]{6172680}) Let $\boldsymbol{{D}} \in \mathbb{C}^{M \times M}$ and $\boldsymbol{{S}} \in \mathbb{C}^{M \times M}$ be Hermitian nonnegative-definite while  $\boldsymbol{{H}} \in \mathbb{C}^{M \times K}$ is a random matrix with zero-mean independent  vectors $\boldsymbol{h}_k$, each with covariance  $\frac{1}{M}\boldsymbol{R}_k$. In turn, $\boldsymbol{{D}}$  and $\boldsymbol{{R}}_k$ have uniformly bounded spectral norm w.r.t. $M$. For $z>0$ and $M,K \to \infty$,
\begin{equation*}
    \frac{1}{M} \, \mathrm{tr} \! \left[ \boldsymbol{{D}} \big( \boldsymbol{{H}}\boldsymbol{{H}}^{*} + \boldsymbol{{S}} + z\boldsymbol{{I}}_M)^{-1} \right] - \frac{1}{M} \, \mathrm{tr}[ \boldsymbol{{D}} \Tmat] \stackrel{\text{a.s.}}{\to} 0 ,
\end{equation*}
where 
\begin{equation}\label{eq:Tmat}
    \Tmat = \bigg( \frac{1}{M} \sum \limits_{j=1}^K \frac{\boldsymbol{{R}}_j}{1+e_{j}}  + \boldsymbol{{S}} + z\boldsymbol{{I}}_M \bigg)^{\!-1}
\end{equation}
with coefficients $e_k = \text{lim}_{n\xrightarrow{}\infty} e_k^{(n)}$ for
\begin{equation}
    e_k^{(n)} =  \frac{1}{M} \, \mathrm{tr} \! \left[ \boldsymbol{{R}}_k \bigg( \frac{1}{M} \sum \limits_{j=1}^K \frac{\boldsymbol{{R}}_j}{1+e_{j}^{(n-1)}}  + \boldsymbol{{S}} + z\boldsymbol{{I}}_M \bigg)^{\!-1}, \right]
\end{equation}
with initial values $e_k^{(0)}=M$.
\end{theorem}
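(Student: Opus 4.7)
The plan is to establish the claimed trace convergence via the Stieltjes-transform / deterministic-equivalent machinery of Bai-Silverstein type, specifically in its multi-correlated formulation. The central object is the resolvent $\mathbf{Q}(z) = (\mathbf{H}\mathbf{H}^*+\mathbf{S}+z\mathbf{I}_M)^{-1}$, and the goal is to show that $\frac{1}{M}\mathrm{tr}(\mathbf{D}\mathbf{Q}) - \frac{1}{M}\mathrm{tr}(\mathbf{D}\mathbf{T}) \stackrel{\text{a.s.}}{\to} 0$, where $\mathbf{T}$ is the deterministic matrix built from the scalar coefficients $e_k$ as given in \eqref{eq:Tmat}. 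Since the statement is already posed at the level of a single linear functional of the resolvent, there is no need to prove convergence of the full empirical spectral distribution.

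First, I would exploit the column-wise decomposition $\mathbf{H}\mathbf{H}^* = \sum_{k=1}^K \mathbf{h}_k\mathbf{h}_k^*$ and apply the Sherman-Morrison identity to the leave-one-column-out resolvent $\mathbf{Q}_k = (\mathbf{H}\mathbf{H}^*-\mathbf{h}_k\mathbf{h}_k^*+\mathbf{S}+z\mathbf{I}_M)^{-1}$, obtaining $\mathbf{Q}\mathbf{h}_k = \mathbf{Q}_k\mathbf{h}_k/(1+\mathbf{h}_k^*\mathbf{Q}_k\mathbf{h}_k)$. Because $\mathbf{h}_k$ is independent of $\mathbf{Q}_k$ and has covariance $\frac{1}{M}\mathbf{R}_k$ of uniformly bounded spectral norm, the standard trace lemma yields $\mathbf{h}_k^*\mathbf{Q}_k\mathbf{h}_k - \frac{1}{M}\mathrm{tr}(\mathbf{R}_k\mathbf{Q}_k) \stackrel{\text{a.s.}}{\to} 0$, and a rank-one perturbation bound upgrades this to $\mathbf{h}_k^*\mathbf{Q}_k\mathbf{h}_k - \frac{1}{M}\mathrm{tr}(\mathbf{R}_k\mathbf{Q})\stackrel{\text{a.s.}}{\to}0$. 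Substituting into the identity $\mathbf{T}-\mathbf{Q} = \mathbf{T}(\mathbf{Q}^{-1}-\mathbf{T}^{-1})\mathbf{Q}$ and grouping terms produces an error that is controlled, in the sense $\frac{1}{M}\mathrm{tr}(\mathbf{D}(\mathbf{Q}-\mathbf{T}))\to 0$, once $e_k$ is identified as the deterministic proxy for $\frac{1}{M}\mathrm{tr}(\mathbf{R}_k\mathbf{T})$.

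The hard part will be two-fold. First, one must prove that the iterative scheme $e_k^{(s)} = \frac{1}{M}\mathrm{tr}\bigl[\mathbf{R}_k\bigl(\frac{1}{M}\sum_{j}\frac{\mathbf{R}_j}{1+e_j^{(s-1)}}+\mathbf{S}+z\mathbf{I}_M\bigr)^{-1}\bigr]$ converges to a unique nonnegative fixed point from the initialization $e_k^{(0)}=M$; the natural route is to verify that the map is a \emph{standard interference function} in the sense of Yates (positivity, monotonicity, and scalability), which guarantees convergence to a unique solution regardless of initialization. Second, one must promote the approximations from convergence in probability to almost sure convergence; this requires quantitative concentration estimates for quadratic forms $\mathbf{h}_k^*\mathbf{A}\mathbf{h}_k$ with summable-in-$M$ tail bounds, so that the Borel-Cantelli lemma applies. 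The uniform spectral-norm boundedness of $\mathbf{D}$ and the $\mathbf{R}_k$ is precisely the hypothesis that keeps all intermediate traces $O(1)$ and keeps the concentration rates strong enough for the sum over $k$ and over $M$ to remain controlled in the joint limit $M,K\to\infty$.
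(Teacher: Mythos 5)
The paper does not actually prove this theorem: it is quoted wholesale, with attribution, as Theorem 1 of the cited random-matrix reference, and the appendix containing it offers no argument of its own. There is therefore nothing in-paper to measure your proposal against; the relevant comparison is with the proof in the cited source, and your outline reconstructs that proof's architecture faithfully --- column-wise decomposition of $\boldsymbol{H}\boldsymbol{H}^{*}$, Sherman--Morrison applied to the leave-one-out resolvent $\boldsymbol{Q}_k$, the trace lemma plus a rank-one perturbation bound to trade quadratic forms for normalized traces, the resolvent identity $\boldsymbol{T}-\boldsymbol{Q}=\boldsymbol{T}(\boldsymbol{Q}^{-1}-\boldsymbol{T}^{-1})\boldsymbol{Q}$ with $e_k$ playing the role of a deterministic proxy for $\frac{1}{M}\mathrm{tr}(\boldsymbol{R}_k\boldsymbol{T})$, and a separate existence/uniqueness/convergence analysis of the fixed-point system (the source indeed argues via the positivity, monotonicity, and scalability properties of standard interference functions that you invoke). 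One caveat is worth recording: almost-sure convergence of $\boldsymbol{h}_k^{*}\boldsymbol{Q}_k\boldsymbol{h}_k-\frac{1}{M}\mathrm{tr}(\boldsymbol{R}_k\boldsymbol{Q}_k)$ cannot be extracted from the hypotheses as literally transcribed here, since ``zero-mean with covariance $\frac{1}{M}\boldsymbol{R}_k$'' says nothing about tails; the original theorem assumes $\boldsymbol{h}_k=\frac{1}{\sqrt{M}}\boldsymbol{R}_k^{1/2}\boldsymbol{x}_k$ with i.i.d.\ entries possessing finite higher-order (eighth) moments, and that is precisely what makes your Borel--Cantelli step close. Your plan implicitly acknowledges this by asking for summable-in-$M$ tail bounds, but you should state the missing moment hypothesis explicitly rather than treat it as a technicality already covered by the assumptions as written.
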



\section{}\label{proof:SINR_Fire}
Let us drop the time index and define the matrix
\begin{equation}
    \boldsymbol{\Omega}_m = \left(   \Gmh \Pmat{}  \Gmh ^* -  \boldsymbol{\hat{g}}_m \boldsymbol{\hat{g}}_m^*p_m + \boldsymbol{\Sigma} \right)^{\!-1},
\end{equation} 
with $\boldsymbol{P} = \mathrm{diag}\{ p_m \ls \forall  \ls m \}$ and $\boldsymbol{\Omega}_m' = L  \boldsymbol{\Omega}_m$. Then, \eqref{eq:SINR} can be written as
\begin{align}
    \mathrm{SINR}_m & =  \boldsymbol{\hat{g}}_m ^* \boldsymbol{\Omega}_m\boldsymbol{\hat{g}}_m   \, p_m \\
    & = \frac{p_m}{L} \, \mathrm{tr} \! \left[   \boldsymbol{\hat{g}}_m\boldsymbol{\hat{g}}_m^* \boldsymbol{\Omega}_m' \right].
\end{align}
For $M$,$L$ $\xrightarrow{}\infty$, using \cite[Lemma 4]{6172680} and Theorem \ref{th:RMT1},  
\begin{equation}\label{eq:LargeMKSINR}
    \frac{p_m}{L} \, \mathrm{tr} \! \left[   \boldsymbol{\hat{g}}_m\boldsymbol{\hat{g}}_m^* \boldsymbol{\Omega}_m' \right] -  \frac{p_m}{L} \mathrm{tr} [ \boldsymbol{\Gamma}_m \Tmat_m ] \stackrel{\text{a.s.}}{\to} 0.
\end{equation}
In our case, the role of $\big( \boldsymbol{{H}}\boldsymbol{{H}}^{*} + \boldsymbol{{S}} + z\boldsymbol{{I}}_M)^{\!-1}$ in Theorem \ref{th:RMT1} is played by $\boldsymbol{\Omega}_m'$. There is a direct mapping between the terms in the aforementioned theorem and our problem, namely (\emph{i}) $\boldsymbol{{D}}= \boldsymbol{\Gamma}_m \, p_m$, (\emph{ii}) $\boldsymbol{{R}}_j= \boldsymbol{\Gamma}_j \, p_j$, and (\emph{iii}) $\boldsymbol{{S}} + z\boldsymbol{{I}}_M=\frac{1}{L} \boldsymbol{\Sigma}$ with $\Tmat_m$ following the structure of $\Tmat$ in Theorem \ref{th:RMT1}, 
\begin{equation}
    \Tmat_m = \bigg( \frac{1}{L} \sum \limits_{j \neq m }^M \frac{\boldsymbol{\Gamma}_j}{1+e_{j}} \, p_j  + \frac{1}{L} \boldsymbol{\Sigma} \bigg)^{\!-1}.
\end{equation}
The  coefficients can be calculated as $e_{j} = \lim_{t \to \infty} e_{j}^{(t)} $ with
\begin{align}
    e_{j}^{(t)} & = p_j \, \mathrm{tr} \Bigg[ \boldsymbol{\Gamma}_j \bigg(  \sum \limits_{i \neq j }^M \frac{\boldsymbol{\Gamma}_i}{1+e_{i}^{(t-1)}} \, p_i  +  \boldsymbol{\Sigma} \bigg)^{\!-1}  \Bigg] .
\end{align}
The fixed-point algorithm can be used to compute $e_{j}$ and
has been proved to converge \cite{6172680}. Finally, since matrices $\boldsymbol{\Gamma}_m$ and $\Tmat_m$ are diagonal, \eqref{eq:LargeMKSINR} can be written as
\begin{align}
    \frac{p_m}{L} \mathrm{tr} [ \boldsymbol{\Gamma}_m \Tmat_m ]  = {p_m}\, \mathrm{tr} \! \left[  \boldsymbol{\Gamma}_m \bigg(  \sum \limits_{i \neq m }^M \frac{\boldsymbol{\Gamma}_i}{1+e_{i}} \, p_i  +  \boldsymbol{\Sigma} \bigg)^{\!-1} \right],
\end{align}
and, after some algebra, Theorem \ref{prop:MMSE_Fire} is proved.

\bibliography{references}
\bibliographystyle{ieeetr}
\end{document}